\definecolor{DarkPurple}{RGB}{119,30,125}
\long\def\ca#1\cb{} 
\newcommand{\braket}[2]{\langle #1 \hspace{1pt} | \hspace{1pt} #2 \rangle}
\newcommand{\ketbra}[2]{| \hspace{1pt} #1 \rangle \langle #2 \hspace{1pt} |}
\newcommand{\ket}[1]{|#1\rangle}               
\newcommand{\bra}[1]{\langle #1|}              
\newcommand{\ip}[2]{\langle #1|#2\rangle}      
\newcommand{\CC}{\mathcal{C}}
\newcommand{\DC}{\mathcal{D}}
\newcommand{\HC}{\mathcal{H}}
\newcommand{\PC}{\mathcal{P}}
\renewcommand{\geq}{\geqslant}
\renewcommand{\leq}{\leqslant}
\newcommand*{\id}{\openone}
\newtheorem{corollary}{Corollary}
\newtheorem{theorem}{Theorem}
\newtheorem{lemma}{Lemma}
\newtheorem{proposition}{Proposition}
\newtheorem{definition}{Definition}
\newcommand{\tr}[1]{\textnormal{tr}\left[#1\right]}
\DeclareMathOperator{\E}{\mathbb{E}}
\renewcommand{\bm}[1]{\boldsymbol{#1}}            
\newcommand{\ind}[1]{\textnormal{ind}[#1]}        
\newcommand{\cov}{\textnormal{Cov}}
\newcommand{\var}{\textnormal{Var}}
\newcommand{\med}{\textnormal{median}}
\begin{document}
\title{Local measurement strategies for multipartite entanglement quantification}

\author{Luke Coffman}
\affiliation{Department of Physics, University of Colorado, Boulder, Colorado 80309, USA}
\affiliation{JILA, NIST and University of Colorado, Boulder, Colorado 80309, USA}

\author{Akshay Seshadri}
\affiliation{Department of Physics, University of Colorado, Boulder, Colorado 80309, USA}
\affiliation{JILA, NIST and University of Colorado, Boulder, Colorado 80309, USA}

\author{Graeme Smith}
\affiliation{Department of Physics, University of Colorado, Boulder, Colorado 80309, USA}
\affiliation{JILA, NIST and University of Colorado, Boulder, Colorado 80309, USA}
\affiliation{Institute for Quantum Computing and Department of Applied Math, University of Waterloo}

\author{Jacob L. Beckey}
\affiliation{Department of Physics, University of Colorado, Boulder, Colorado 80309, USA}
\affiliation{JILA, NIST and University of Colorado, Boulder, Colorado 80309, USA}

\begin{abstract} 
Despite multipartite entanglement being a global property of a quantum state, a number of recent works have made it clear that it can be quantified using only local measurements. This is appealing because local measurements are the easiest to implement on current quantum hardware. However, it remains an open question what protocol one should use in order to minimize the resources required to estimate multipartite entanglement from local measurements alone. In this work, we construct and compare several estimators of multipartite entanglement based solely on the data from local measurements. We first construct statistical estimators for a broad family of entanglement measures using local randomized measurement (LRM) data before providing a general criterion for the construction of such estimators in terms of projective 2-designs. Importantly, this allows us to de-randomize the multipartite estimation protocol based on LRMs. In particular, we show how local symmetric informationally complete POVMs enable multipartite entanglement quantification with only a single measurement setting. For all estimators, we provide both the classical post-processing cost and rigorous performance guarantees in the form of analytical upper bounds on the number of measurements needed to estimate the measures to any desired precision.
\end{abstract}
\maketitle

\section{Introduction}
In the nearly 90 years since the famous Einstein-Podolsky-Rosen (EPR) paper~\cite{einstein1935Can} initiated the study of entanglement, there has been an enormous effort to understand this phenomenon theoretically \cite{horodecki2009quantum} and, as quantum technologies have matured, to probe it experimentally \cite{guhne2009Entanglement,islam2015measuring, kaufman2016quantum, bluvstein2022quantum}. While many questions in entanglement theory remain only partially resolved, the experimental violations of Bell's inequalities \cite{bell1964Einstein,clauser1969Proposed} over the past several decades \cite{freedman1972Experimental,hensen2015Loopholefree,giustina2015SignificantLoopholeFree,shalm2015Strong} imply, conclusively, that entanglement can lead to fundamentally \textit{non-local} correlations. Nonetheless, the entanglement content of an unknown quantum state can be quantified and characterized using \textit{local} measurements alone \cite{elben2018renyi,brydges2019probing,elben2019statistical,notarnicola2021randomized,elben2020MixedState,vermersch2023Manybody}.

The most informative measurements allowable in quantum mechanics correspond to positive operator-valued measures (POVMs) on many identical copies of ones quantum state (i.e.~on $\rho^{\otimes n}$), so-called many-copy measurements. Because all single-copy POVMs are a subset of all POVMs, many-copy measurements must be at least powerful as single-copy measurements. The same argument holds when one restricts further to \textit{local} measurements on the individual subsystems. There is a natural trade-off between ease of experimental implementation and information gained. This intuition can be made rigorous in a number of ways, including comparing lower bounds on the sample complexity of various learning tasks under different measurement restrictions. 

For example, consider the canonical learning task of quantum state tomography (QST), where one's goal is to produce an estimate $\hat{\rho}$ of an unknown $d$-dimensional state $\rho$, such that $\|\hat{\rho} - \rho \|_1 < \epsilon$ with high probability. If restricting to local measurements, one requires at least $\Omega(\frac{d^3}{\epsilon^2})$ copies of the state \cite{haah2017SampleOptimal}, while with many-copy measurements, $\Theta (\frac{d^2}{\epsilon^2})$ copies are necessary and sufficient \cite{odonnell2016Efficient,haah2017SampleOptimal}. Because $d$ scales exponentially with the number of quantum systems, this seemingly minor improvement would have significant impact in practice. While many-copy measurements are necessary for optimal sample complexity \cite{bubeck2020Entanglement}, they are nowhere near being implementable on today's quantum systems. Even global single-copy measurements are infeasible for moderate system sizes. Thus, in this work, we focus on \textit{local} measurements on each individual qubit of an $n$-qubit state $\rho$. 

Requiring full knowledge of state, as in QST, is almost always superfluous. Typically, we are interested in functions of quantum states that can be estimated directly without ever estimating the quantum state itself. This realization has led to a burgeoning field of study regarding the optimal estimation of properties of quantum states, in which the estimation procedures have lower sample complexity and classical post-processing requirements than full state tomography \cite{vanenk2012measuring,odonnell2015Quantum,elben2018renyi,elben2019statistical,rath2021Quantum,huang2020predicting,elben2020MixedState,huang2022quantum,elben2023randomized,vermersch2023Manybody}. Our work contributes to the ongoing effort to devise practical methods for the experimental estimation of entanglement using easily implementable measurements.

In this work, we construct statistical estimators of a broad family of entanglement measures based only on local POVMs. While it is known that entangled measurements on many identical copies of a state are necessary for the optimal property testing~\cite{bubeck2020Entanglement} and that there are fundamental trade-offs between entanglement characterization and detection with and without multi-copy measurements \cite{lu2016Tomography,liu2022characterizing}, separations between resource requirements for the estimation of multipartite entanglement measures remain an interesting area of study with many open problems. Moreover, as these are the simplest measurements to implement in practice, our results are relevant to experiments, both current and near-term, that wish to probe entanglement in quantum systems without the need to prepare identical copies simultaneously or utilize multi-copy measurements.

Specifically, we generalize Ref.~\cite{ohnemus2023Quantifying}, which devised estimators for the generalized concurrence \cite{carvalho2004decoherence} based on local randomized measurements (LRMs), to a more general family of multipartite entanglement measures called the concentratable entanglements (CEs) \cite{beckey2021computable}. We provide analytical performance guarantees on the estimation of this family of entanglement measures in the form of upper bounds on the number of measurements needed to estimate the CEs to $\epsilon$ precision with high probability (i.e. upper bounds on the sample complexity). Moreover, we provide estimates of the post-processing cost of computing our estimators, a crucial consideration in practice. We then provide a distinct estimation procedure, based on LRM data and using median-of-means (MoM) estimation, that provides a square-root enhancement in the sample complexity of this task -- making multipartite entanglement quantification in systems of several tens of qubits a feasible prospect.

A remaining limitation of LRMs is that they require an exponential number of measurement \textit{settings} to be implemented, which can be experimentally challenging. To address this, we provide a general theorem that shows that any projective $2$-design can be used to construct an estimator for the CEs. In particular, this theorem implies that local symmetric informationally complete (SIC) POVMs can be used to estimate all of the CEs using a \textit{single measurement setting}, generalizing the work in Ref.~\cite{stricker2022Experimental} to the study of multipartite entanglement. While the classical post-processing of this method is more costly than the other methods, it is likely preferable to having to change the experimental measurement setting an exponential number of times. Thus, we expect these methods to be of interest to the experimental community.

\section{Preliminaries}\label{sec:preliminaries}
In this section, we establish our notation and provide some essential facts from quantum information theory and classical statistics that are needed to prove our main results. To begin, let $\mathcal{B} = \{\ket{j}\}$ denote a basis of a finite-dimensional Hilbert space $\mathcal{H}$. Then, $\{\ket{j} \ket{j'} \mid \ket{j}, \ket{j'} \in \mathcal{B}\}$ forms a basis for the composite space $\mathcal{H} \otimes \mathcal{H}$. The SWAP operator $\mathbb{F}: \mathcal{H}\otimes \mathcal{H} \rightarrow \mathcal{H}\otimes \mathcal{H}$, as the name suggests, is the operator that swaps two tensor components
\begin{align}
     \mathbb{F}\ket{j}\ket{j'} = \ket{j'}\ket{j}, \quad \forall~ \ket{j},\ket{j'} \in \mathcal{B}. 
\end{align}
This operator is diagonal in the Bell basis and has eigenvalues $\pm 1$. The $+1$ and $-1$ eigenspaces are called the symmetric and anti-symmetric subspaces, respectively. Denoting the symmetric (anti-symmetric) subspaces as $\Pi_+$ ($\Pi_-$), we can express the spectral decomposition of the SWAP operator as $\mathbb{F} = \Pi_+ - \Pi_-$. Moreover, these projectors resolve the identity $\mathbb{I} \otimes \mathbb{I} = \Pi_+ + \Pi_-$, allowing us to express the subspace projectors as 
\begin{align}
    \Pi_{\pm} = \frac{\mathbb{I} \otimes \mathbb{I} \pm \mathbb{F}}{2}.
\end{align}
The multipartite entanglement measures we consider in this paper are functions of the subsystem purities, so in the proof of our main results, we will utilize the following well-known relationship between the SWAP operator and a state's purity, called the SWAP ``trick.''

\begin{lemma}[The SWAP ``trick''] \label{lemma:swap-trick}
For an $n$-qubit state $\rho$, the following equality holds
\begin{align}
    \tr{\mathbb{F}\rho^{\otimes 2}} &= \tr{\rho^2}.
\end{align}
\end{lemma}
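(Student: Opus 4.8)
The plan is to evaluate $\tr{\mathbb{F}\rho^{\otimes 2}}$ directly, using a convenient operator representation of the SWAP operator together with the multiplicativity of the trace across tensor factors. First I would fix an orthonormal basis $\{\ket{a}\}$ of the single-copy Hilbert space and write
\begin{align}
    \mathbb{F} = \sum_{a,b} \ketbra{a}{b} \ot \ketbra{b}{a}.
\end{align}
This representation is justified in one line: acting on an arbitrary product basis vector $\ket{j}\ket{j'}$ gives $\sum_{a,b}\ket{a}\braket{b}{j}\ot\ket{b}\braket{a}{j'} = \ket{j'}\ket{j}$, which is exactly the defining action of $\mathbb{F}$.

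Next I would substitute this into $\tr{\mathbb{F}(\rho\ot\rho)}$ and use $\tr{A\ot B} = \tr{A}\tr{B}$ to factor each term:
\begin{align}
    \tr{\mathbb{F}\rho^{\otimes 2}}
    &= \sum_{a,b} \tr{\ketbra{a}{b}\rho}\,\tr{\ketbra{b}{a}\rho} \nonumber \\
    &= \sum_{a,b} \bra{b}\rho\ket{a}\,\bra{a}\rho\ket{b}.
\end{align}
The completeness relation $\sum_a \ketbra{a}{a} = \mathbb{I}$ then collapses the inner sum, leaving $\sum_b \bra{b}\rho^2\ket{b} = \tr{\rho^2}$, which is the claim.

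An equivalent route would be to expand $\rho = \sum_i p_i \ketbra{\psi_i}{\psi_i}$ in its eigenbasis and use $\tr{\mathbb{F}(\ketbra{\psi_i}{\psi_i}\ot\ketbra{\psi_j}{\psi_j})} = \abs{\braket{\psi_i}{\psi_j}}^2$, or to invoke the spectral decomposition $\mathbb{F} = \Pi_+ - \Pi_-$ directly; both reduce to the same bookkeeping. The only point requiring care is tracking which copy each partial trace acts upon, so that the tensor-factor factorization is applied correctly. There is no genuine analytic obstacle here: the identity is a purely algebraic consequence of the definition of $\mathbb{F}$ and the multiplicativity of the trace over tensor products, and it holds verbatim in any dimension, so the restriction to $n$-qubit $\rho$ plays no role in the argument.
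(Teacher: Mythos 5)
Your proof is correct: the basis expansion $\mathbb{F} = \sum_{a,b} \ketbra{a}{b} \otimes \ketbra{b}{a}$, trace multiplicativity over tensor factors, and the completeness relation give exactly the identity, and each step is justified. The paper itself does not spell out a proof (it defers to Appendix A of its Ref.~[9], noting the result follows "from the definitions of the SWAP operator and the trace"), and your argument is precisely that standard calculation, so it matches the intended approach.
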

This lemma can be proven from the definitions of the SWAP operator and the trace (see, for example, Appendix A of Ref.~\cite{beckey2023Multipartite}). Around the turn of this century, it was realized that Lemma~\ref{lemma:swap-trick} suggests a method of purity estimation if one could prepare two identical copies of the quantum state and coherently manipulate the systems with high fidelity \cite{ekert2002Direct}. The necessary level of coherent control has only become possible in the past decade on ion trap and neutral atom systems \cite{islam2015measuring,kaufman2016quantum,bluvstein2022quantum}, but remains at the cutting-edge. In an effort to avoid the use of two-copy measurements, Ref.~\cite{vanenk2012measuring} initiated the study of purity estimation using single-copy measurements and many great theoretical and experimental works to this end have been produced \cite{elben2018renyi,elben2019statistical,stricker2022Experimental}. While the current work focuses on the task of estimating multipartite entanglement measures using only local measurements, we suspect our techniques could be adapted to the direct estimation of all non-linear functionals of the form $\tr{\rho^k}$ for $k \in \mathbb{Z}^+$.

\subsection{Concentratable Entanglement}
In this work, we consider the estimation of the family of multipartite entanglement measures defined in Ref.~\cite{beckey2021computable}. This family includes the pure state entanglement measures from Refs.~\cite{meyer2002global,brennen2003observable,carvalho2004decoherence} as special cases and has several properties that make them interesting (see Ref.~\cite{beckey2021computable} for details). That work shows how to compute the measures using a parallelized controlled-SWAP circuit, which requires two identical copies of an $n$-qubit state and $n$ ancillary qubits. In a follow-up work \cite{beckey2023Multipartite}, the requirements were decreased to two identical copies of the state of interest using Bell basis measurements. Still, creating two identical copies of a moderately-sized quantum state and acting coherently on it is at the cutting edge of quantum information processing \cite{bluvstein2022quantum,bluvstein2023Logical}. 

Before constructing estimators of the CEs, we must introduce the quantities themselves. To that end, we now define the CEs and mention some useful facts about them. 
\begin{definition}[Concentratable Entanglement~\cite{beckey2021computable}] \label{def:CEs}
Let $\ket{\psi}$ be a pure quantum state of $n$ qubits and let $[n]=\{1,2,\ldots,n\}$ denote the full set of qubit labels. For any non-empty set of qubit labels $S \in \PC([n]) \setminus \{\emptyset\}$, the Concentratable Entanglement (CE) is defined as 
\begin{align}\label{eq:CE}
    \CC_{\ket{\psi}}(S) &= 1 -\frac{1}{2^{s}} \sum_{\alpha \in \PC(S)} \tr{\rho_{\alpha}^2},
\end{align}
where $s:=|S|$ denotes the cardinality of $S$, and the $\rho_{\alpha}$'s are reduced states of $\ket{\psi}\bra{\psi}$ obtained by tracing out subsystems with labels not in $\alpha$. For the trivial subset, we define $\tr{\rho_{\emptyset}^2}:=1$.
\end{definition}

Crucial to the present work is the following expression for the CEs \cite{beckey2023Multipartite}
\begin{align}\label{eq:CE-local-symmetric-subspaces}
    \CC_{\ket{\psi}}(S) = 1 - \tr{\rho^{\otimes 2} \prod_{i \in S} \Pi_+^i},
\end{align}
where $\Pi_+^i := \mathbb{I}_1 \otimes \dotsm  \otimes  \mathbb{I}_{i-1}  \otimes \Pi_+^i \otimes \mathbb{I}_{i+1} \otimes \dotsm \otimes \mathbb{I}_n $ denotes the symmetric subspace projector on the $i$-th qubit of the first and second copy of $\rho$ and identities for all $j \in S$ such that $j \neq i$. Note this convention for the representation of operators is used throughout.

Importantly, when $S=[n]$, the CE is related to the multipartite concurrence defined in Ref.~\cite{carvalho2004decoherence}, denoted $c_n (\ket{\psi})$, via the simple expression 
\begin{align}\label{eq:CE-concurrence}
    \mathcal{C}_{\ket{\psi}}([n]) = \frac{c_n (\ket{\psi})^2}{4},
\end{align}
which is the measure Ref.~\cite{ohnemus2023Quantifying} consider estimating using local random measurements. Our results will hold for that measure as well as all others obtained by choice of any $S \subseteq [n]$. While there are many interesting formulas for the CEs given in Refs.~\cite{beckey2021computable,beckey2023Multipartite}, these will suffice to appreciate our main results. We now turn to some facts from classical statistics that will be needed to prove our sample complexity upper bounds.

\subsection{Results from Classical Statistics}\label{sec:classical-stats}
In this work, we will work exclusively with bounded random variables $X_1, \dots, X_M$, such that each $X_i \in [a,b]$ and $a, b \in \mathbb{R}$.
We will denote the expectation value of a random variable $X$ by $\mathbb{E}[X]$ and its variance by $\var(X)$.
We denote the covariance of two (possibly dependent) random variables $X$ and $Y$ by $\cov[X, Y] = (X - \E[X]) (Y - \E[Y])$.
We denote estimators of statistical quantities with hats. For example, if the actual parameter of interest is denoted $\theta$, we will denote an estimator of the parameter as $\hat{\theta}$. We say an estimator is unbiased if $\E[\hat{\theta}] =\theta$. Many of our estimators will be expressed using indicator functions, which are defined as
\begin{align}
\ind{A}=
    \begin{cases}
        1,& \mathrm{ condition } ~A~ \mathrm{ is}~\mathrm{ true }\\
        0,&  \mathrm{otherwise}.
    \end{cases}
\end{align}
For example, given a fixed, length $s$ bitstring $\bm{z}_0 \in \{0, 1\}^{s}$, $\ind{\bm{Z} = \bm{z}_0}$ takes the value $1$ when the random variable $\bm{Z}$ is equal to $\bm{z}_0$, and takes the value $0$ otherwise. Crucially, the probability of obtaining a particular bitstring, $\bm{z}_0$ can be expressed as $\E[\ind{\bm{Z}=\bm{z}_0}] = P(\bm{z}_0).$ Note that we will use uppercase letters to denote random variables, and the corresponding lowercase letter to denote a specific instance of the random variable. 

In practice, finite sample statistics prohibit us from exactly determining quantities of interest. A natural approach is then to ask how many measurements are sufficient to guarantee that our estimate is close to the true value with high probability. This is a well-studied problem in classical statistics and results in so-called \textit{sample complexity upper bounds}. We now present the two methods that will allow us to provide analytical performance guarantees on our estimators of multipartite entanglement.

We start with a well-known result called Hoeffding's inequality, which provides exponential concentration for sums of independent bounded random variables. We will state it without proof as it is a standard result proven in most mathematical statistics textbooks (e.g. Sec. 2.2 of Ref.~\cite{vershynin2018HighDimensional}).

\begin{proposition}[Hoeffding’s inequality] \label{fact:hoeffding}
Let $X_1, \dots, X_M$ be independent random variables such that $a \leq X_i \leq b$ and $\E[X_i]=\mu$ for all $i \in [M]$.
Then, given a precision $\epsilon > 0$ and a confidence level $1 - \delta \in (0, 1)$, choosing $M = \lceil (b - a)^2 \log(2/\delta) / (2 \epsilon^2) \rceil$ suffices to guarantee that
\begin{align}
    \label{eq:Hoeffding-mean}
    \mathbb{P}\left( \left|\frac{1}{M}\sum_{i=1}^M X_i - \mu \right| \geq \epsilon \right) \leq \delta.
\end{align}
\end{proposition}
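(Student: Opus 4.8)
The plan is to prove this via the standard Chernoff-bounding (exponential Markov) method: reduce the two-sided deviation to two one-sided tails, control each through the moment generating function of the summands, and then invert the resulting exponential bound to solve for $M$. First I would treat the upper tail $\mathbb{P}(\sum_{i=1}^M (X_i - \mu) \geq M\epsilon)$. For any $t > 0$, applying Markov's inequality to the nonnegative random variable $e^{t\sum_i (X_i - \mu)}$ and using independence to factor the expectation gives
\begin{align}
    \mathbb{P}\!\left(\sum_{i=1}^M (X_i - \mu) \geq M\epsilon\right) \leq e^{-tM\epsilon}\,\E\!\left[e^{t\sum_{i=1}^M (X_i - \mu)}\right] = e^{-tM\epsilon}\prod_{i=1}^M \E\!\left[e^{t(X_i - \mu)}\right].
\end{align}

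The technical core is bounding each factor, for which I would invoke Hoeffding's lemma: for a mean-zero random variable $Y$ taking values in an interval of length $b-a$, one has $\E[e^{tY}] \leq \exp(t^2(b-a)^2/8)$. Applying this with $Y = X_i - \mu \in [a-\mu,\, b-\mu]$ yields the product bound $\exp(M t^2 (b-a)^2/8)$, so the right-hand side becomes $\exp(-tM\epsilon + Mt^2(b-a)^2/8)$. Minimizing this quadratic exponent over $t > 0$ at $t = 4\epsilon/(b-a)^2$ produces the one-sided bound $\exp(-2M\epsilon^2/(b-a)^2)$. Running the identical argument on $-X_i$ controls the lower tail, and a union bound combines the two into
\begin{align}
    \mathbb{P}\!\left(\left|\frac{1}{M}\sum_{i=1}^M X_i - \mu\right| \geq \epsilon\right) \leq 2\exp\!\left(-\frac{2M\epsilon^2}{(b-a)^2}\right).
\end{align}
To finish, I would require this bound to be at most $\delta$; since $2\exp(-2M\epsilon^2/(b-a)^2) \leq \delta$ is equivalent to $M \geq (b-a)^2\log(2/\delta)/(2\epsilon^2)$, taking the ceiling gives exactly the stated sample size.

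I expect the main obstacle to be Hoeffding's lemma itself, as the rest of the argument is a mechanical optimization. The cleanest route is to set $\psi(t) = \log \E[e^{tY}]$ and verify $\psi(0) = 0$ and $\psi'(0) = \E[Y] = 0$, so that a second-order Taylor expansion gives $\psi(t) = \tfrac{1}{2}t^2 \psi''(\xi)$ for some $\xi$. The crux is the uniform bound $\psi''(t) \leq (b-a)^2/4$, which follows by recognizing $\psi''(t)$ as the variance of $Y$ under the exponentially tilted measure with density proportional to $e^{tY}$; since this tilted distribution is still supported on an interval of length $b-a$, its variance is at most $(b-a)^2/4$ by Popoviciu's inequality. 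Substituting this bound yields $\psi(t) \leq t^2(b-a)^2/8$, which is precisely the lemma and closes the argument.
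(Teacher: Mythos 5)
Your proof is correct. Note that the paper does not prove this proposition at all: it is stated as a standard fact with a pointer to textbook treatments (Sec.\ 2.2 of Vershynin's \emph{High-Dimensional Probability}), so there is no in-paper argument to compare against. Your proposal is exactly the canonical proof those references give: the Chernoff (exponential Markov) reduction of each one-sided tail, Hoeffding's lemma $\E[e^{tY}] \leq \exp\left(t^2(b-a)^2/8\right)$ for mean-zero bounded $Y$, optimization at $t = 4\epsilon/(b-a)^2$, a union bound, and inversion of $2\exp\left(-2M\epsilon^2/(b-a)^2\right) \leq \delta$ to get $M \geq (b-a)^2\log(2/\delta)/(2\epsilon^2)$. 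Your sketch of Hoeffding's lemma via the second derivative of the cumulant generating function, interpreted as the variance under the exponentially tilted measure and bounded by Popoviciu's inequality, is also sound and complete; it is a slightly slicker route than the elementary convexity argument some texts use, but it is substantively the same result, so nothing is missing.
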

If, however, the random variables in question have a large allowed range (but small variance), Hoeffding will often be looser than bounds that utilize the information about this second moment. In other words, the number of measurements required to achieve $\epsilon$-precision will be overestimated. To provide stronger concentration, we will make use of MoM estimation (e.g. Sec. 2.3 of Ref.~\cite{lerasle2019Lecture}). 

\begin{proposition}[Median-of-means (MoM) estimation]\label{prop:med-of-means}
Suppose that $X_1, \dotsc, X_M$ are i.i.d. random variables, with variance bounded above by $\sigma^2 > 0$.
Let $\epsilon > 0$ be the precision and $1 - \delta$ be the confidence level, and $M = N_B B$ be the total number of samples,
where the number of samples per batch is $N_B = \lceil 8 \log(1/\delta) \rceil$ and the number of batches is $B = \lceil 4 \sigma^2 / \epsilon^2 \rceil$.
Let $\widehat{\mu}_b$ denote the empirical mean of $X_{(b - 1) B + 1}, \dotsc, X_{b B}$, for $b \in \{1, \dotsc, N_B\}$.
Then, we have
\begin{align}
\mathbb{P}\left(\left|\med(\widehat{\mu}_1, \dotsc, \widehat{\mu}_B) - \mathbb{E}[X]\right| \geq \epsilon\right) \leq \delta.
\end{align}
\end{proposition}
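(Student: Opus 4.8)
The plan is to prove this via the classic two-step median-of-means argument: a second-moment (Chebyshev) bound controlling each individual batch mean, followed by a binomial tail bound (Hoeffding's inequality, Proposition~\ref{fact:hoeffding}) controlling how many batch means can simultaneously land far from the true mean. Write $\mu := \E[X]$, and let $m$ denote the number of samples per batch and $k$ the number of batches, so that $\widehat{\mu}_1, \dotsc, \widehat{\mu}_k$ are \emph{independent}, each an average of $m$ i.i.d.\ copies of $X$ (in the statement, these are the $B$ consecutive samples indexing each $\widehat{\mu}_b$). The two roles decouple: the batch size $m = \lceil 4\sigma^2/\epsilon^2 \rceil$ governs precision, while the number of batches $k = \lceil 8\log(1/\delta) \rceil$ governs confidence.

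First I would bound the failure probability of a single batch. Since the samples are i.i.d.\ with variance at most $\sigma^2$, each batch mean obeys $\var(\widehat{\mu}_b) \leq \sigma^2/m$, so Chebyshev's inequality gives
\begin{align}
\mathbb{P}\!\left(|\widehat{\mu}_b - \mu| \geq \epsilon\right) \leq \frac{\var(\widehat{\mu}_b)}{\epsilon^2} \leq \frac{\sigma^2}{m\,\epsilon^2}.
\end{align}
The choice $m = \lceil 4\sigma^2/\epsilon^2 \rceil$ forces the right-hand side to be at most $1/4$; I will call a batch \emph{bad} when $|\widehat{\mu}_b - \mu| \geq \epsilon$ and set $p := \mathbb{P}(|\widehat{\mu}_b - \mu| \geq \epsilon) \leq 1/4$.

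The second step is a median-to-majority reduction combined with Hoeffding's inequality. The key observation is that if $|\med(\widehat{\mu}_1, \dotsc, \widehat{\mu}_k) - \mu| \geq \epsilon$, then at least half of the batch means must lie outside the interval $(\mu - \epsilon, \mu + \epsilon)$, i.e.\ at least half the batches are bad. Introducing the i.i.d.\ Bernoulli indicators $W_b := \ind{|\widehat{\mu}_b - \mu| \geq \epsilon}$, each valued in $[0,1]$ with mean $p \leq 1/4$, this containment of events yields
\begin{align}
\mathbb{P}\!\left(\left|\med(\widehat{\mu}_1, \dotsc, \widehat{\mu}_k) - \mu\right| \geq \epsilon\right) \leq \mathbb{P}\!\left(\sum_{b=1}^{k} W_b \geq \frac{k}{2}\right).
\end{align}
Since $\E\big[\sum_b W_b\big] = kp \leq k/4$, applying Hoeffding's inequality to the deviation $\sum_b W_b - \E\big[\sum_b W_b\big] \geq k/4$ bounds the right-hand side by $\exp\!\big(-2(k/4)^2/k\big) = e^{-k/8}$. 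Demanding $e^{-k/8} \leq \delta$ gives $k \geq 8\log(1/\delta)$, so $k = \lceil 8\log(1/\delta) \rceil$ suffices, completing the argument.

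I expect the main obstacle to be the median-to-majority reduction, which is where care is required: one must verify that $|\med - \mu| \geq \epsilon$ genuinely forces at least $\lceil k/2 \rceil$ of the batch means to be bad (handling the even-$k$ convention for the median) so that the event is truly contained in $\{\sum_b W_b \geq k/2\}$. The remaining pieces---Chebyshev on each batch and the Hoeffding tail on the count of bad batches---are routine, and the two constants ($4$ in $m$ and $8$ in $k$) are precisely what make the per-batch failure probability $1/4$ and the final tail $e^{-k/8}$ close out cleanly against $\delta$.
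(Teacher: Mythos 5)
Your proof is correct, and it is essentially the argument the paper itself relies on: the paper states Proposition~\ref{prop:med-of-means} without proof, deferring to its cited lecture notes, and the standard derivation there is exactly your two-step scheme --- Chebyshev on each batch mean to get per-batch failure probability at most $1/4$ (using the batch size $\lceil 4\sigma^2/\epsilon^2\rceil$), the median-to-majority containment, and a one-sided Hoeffding/Chernoff tail on the count of bad batches giving $e^{-k/8}\leq\delta$ for $k=\lceil 8\log(1/\delta)\rceil$ --- with both constants closing out exactly as you show. You also correctly untangle the paper's swapped prose labels (in the statement, $B=\lceil 4\sigma^2/\epsilon^2\rceil$ is really the number of samples per batch and $N_B=\lceil 8\log(1/\delta)\rceil$ the number of batches, so the median should run over $\widehat{\mu}_1,\dotsc,\widehat{\mu}_{N_B}$), which is the reading consistent with the proposition's use in Theorem~\ref{thm:CE-via-LRMs-MoM} and Theorem~\ref{thm:CEs-via-SICs-MoM-K=2}.
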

Both of these results allow us to derive analytical performance guarantees in the form of upper bounds on the sample complexity of an estimator. With these preliminaries in mind, we turn now to our main results.
\begingroup
\begin{table*}[ht!]
    \centering
    \setlength{\tabcolsep}{6pt}
    \renewcommand{\arraystretch}{1.5}
    \begin{tabular}{c|c|c|c}
        \multirow{2}{2cm}{\centering Estimator Name} & \multirow{2}{3cm}{\centering Sample Complexity Upper Bound}  & \multirow{2}{3cm}{\centering Classical Post-processing} & \multirow{2}{3cm}{\centering \# of Measurement Settings} \\
        & & & \\
        \hline
        LRM-Mean, $K=2$ & $M_1 = O\left( \left(\frac{9}{4}\right)^{s} \log\left(\frac{1}{\delta}\right) \epsilon^{-2} \right)$ & $O( K^2 s M_1)$ & $O\left( \left(\frac{9}{4}\right)^{s} \log\left(\frac{1}{\delta}\right) \epsilon^{-2} \right)$  \\
        LRM-MoM, $K=2$ & $M_2 = O\left( \left(\frac{3}{2}\right)^{s} \log\left(\frac{1}{\delta}\right) \epsilon^{-2}\right)$
        & $O( K^2 s M_2)$ & $O\left( \left(\frac{3}{2}\right)^{s} \log\left(\frac{1}{\delta}\right) \epsilon^{-2}\right)$ \\
        SIC-MoM, $K=2$ & $M_3 = O\left(3^{s} \log\left(\frac{1}{\delta}\right) \epsilon^{-2}\right)$ & $O(K^2 s M_3)$ & 1  \\
        SIC-MoM, $K=K_{\mathrm{opt}}$ &  $M_4 = O\left( \left[\left(\frac{3}{2}\right)^{s} \epsilon^{-2} + \sqrt{3^{s}} \epsilon^{-1}\right] \log\left(\frac{1}{\delta}\right) \right)$ &  $\min \{ O(K_{\mathrm{opt}} 2^{s} M_4),  O(K_{\mathrm{opt}}^2 s M_4) \}$ & 1 \\
    \end{tabular}
    \caption{\textbf{Summary of Local Strategies for Multipartite Entanglement Quantification.} The second column of this table represents asymptotic upper bounds on the number of measurements needed to estimate the CEs to precision $\epsilon$, with probability at least $1-\delta$. The third gives asymptotic estimates of the classical time complexity, where we assume that all elementary operations are $O(1)$. Finally, the fourth column summarizes the number of measurement settings needed for each estimator.}
    \label{table:complexities}
\end{table*}
\endgroup

\section{Main Results}

We first generalize and provide analytical performance guarantees for the estimators of multipartite entanglement via LRMs constructed in Ref.~\cite{ohnemus2023Quantifying}. We then utilize median-of-means estimation to obtain a (quadratically) better upper bound on the sample complexity CE estimation using LRM data. Then, we show that any local POVM that forms a projective $2$-design can be used to construct an estimator for the CEs. A corollary of this result is the de-randomization of the LRM protocol which enables multipartite entanglement quantification using a \textit{single measurement setting}. All of our sample complexity upper bounds, as well as estimates of the worst case classical post-processing cost, are summarized in Table~\ref{table:complexities}.

\subsection{CEs via LRMs}

As mentioned above, randomized measurements have been studied in many estimation tasks, but most relevant to our work is the recent work of Ref.~\cite{ohnemus2023Quantifying} in which the authors present a method of estimating the multipartite concurrence \cite{carvalho2004decoherence} using local randomized measurement data. Because one recovers the generalized concurrence easily from the CEs (see Eq.~\eqref{eq:CE-concurrence}), Ref.~\cite{ohnemus2023Quantifying} is easily generalizable to the entire family of CEs. 

As depicted in Fig.~\ref{fig:LRM}, the LRM protocol entails preparing the state of interest, applying  $U = \prod_{i = 1}^n U_i$, where each $U_i$ is taken from some single qubit matrix distribution (e.g. Haar distribution or single-qubit Cliffords \cite{mele2023Introduction}), to our state and then measuring in the computational basis $\{\ket{\mathbf{z}}\}$. Note that  This yields a bitstring $\mathbf{z}$, where $\boldsymbol{z}:=z_1 \cdot z_2 \cdot \dotsm z_n$, with $z_i \in \{0,1\}$, and where the probability of obtaining $\mathbf{z}$ is given as $P(\mathbf{z}) = \tr{U\rho U^{\dagger} \ketbra{\mathbf{z}}{\mathbf{z}}}$. Note that if one only cares about a subset $S \subseteq [n]$ of qubits, then one can restrict the unitary rotation and subsequent computational basis measurement to the subsystems defined by $S$. Specifically, this corresponds to rotation by the unitary $U = \prod_{i \in S} U_i$, by which mean $U_i \in U(2)$ is a unitary acting on the $i$th qubit for $i \in S$, and we implicitly assume that the action of the unitary is trivial (i.e., $\id$) on qubits not in the set $S$. We then have the following proposition.

\begin{proposition}[CEs via LRMs]\label{prop:CE-via-LRM-exact}
    Let $\rho = \ketbra{\psi}{\psi}$ be an $n$-qubit pure state, $U=\prod_{i \in S}U_i$ for $U_{i}\in U(2)$ be the tensor product of single-qubit Haar random unitaries in $S\subseteq [n]$, and $P_{U}(\mathbf{z})$ be the probability of measuring bit-string $\mathbf{z} \in \{0, 1\}^{s}$.
    The CEs can be obtained exactly via 
    \begin{align}\label{eq:CE-LRM-exact-single-string}
        \mathcal{C}_{\ket{\psi}}(S)=1-3^s \E_U[P_{U}(\mathbf{z})^2]
    \end{align}
    for any bitstring $\mathbf{z}$, where $\E_U[\cdot]$ represents an average over the Haar measure.
\end{proposition}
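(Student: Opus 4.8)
The plan is to relate the Haar-averaged collision probability $\E_U[P_U(\mathbf{z})^2]$ to the symmetric-subspace formula for the CE in Eq.~\eqref{eq:CE-local-symmetric-subspaces}. First I would lift the squared probability to a two-copy expression. Writing $P_U(\mathbf{z}) = \tr{U\rho U\ad M_{\mathbf{z}}}$, where $M_{\mathbf{z}} = \ketbra{\mathbf{z}}{\mathbf{z}}$ is the computational-basis projector on the qubits in $S$ (with identity on $\overline{S} = [n]\setminus S$), the identities $\tr{A}^2 = \tr{A \otimes A}$ and $(XY)\otimes(XY) = (X\otimes X)(Y\otimes Y)$ give
\begin{align}
    P_U(\mathbf{z})^2 = \tr{(U\rho U\ad)^{\otimes 2}\, M_{\mathbf{z}}^{\otimes 2}}.
\end{align}

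The central step is the Haar average. Since $U = \prod_{i \in S} U_i$ is a product of independent single-qubit Haar unitaries, each $U_i$ acts on qubit $i$ in \emph{both} copies, so the average factorizes into independent single-qubit two-copy twirls $\TC_i(X) := \E_{U_i}[U_i^{\otimes 2} X (U_i\ad)^{\otimes 2}]$ acting on the pair of $i$-th qubits (the same pairing used to define $\Pi_+^i$). By Schur--Weyl duality the commutant of $U_i^{\otimes 2}$ is spanned by $\mathbb{I}\otimes\mathbb{I}$ and $\mathbb{F}$, equivalently by the orthogonal projectors $\Pi_{\pm}$, and each twirl is the orthogonal projection onto this algebra:
\begin{align}
    \TC_i(X) = \frac{\tr{\Pi_+ X}}{\tr{\Pi_+}}\,\Pi_+ + \frac{\tr{\Pi_- X}}{\tr{\Pi_-}}\,\Pi_-,
\end{align}
with $\tr{\Pi_+} = 3$ and $\tr{\Pi_-} = 1$ for a single qubit. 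Each $\TC_i$ is self-adjoint with respect to the Hilbert--Schmidt inner product (Haar invariance under $U_i \mapsto U_i\ad$), so I would move the twirl onto the measurement operator, obtaining $\E_U[P_U(\mathbf{z})^2] = \tr{\rho^{\otimes 2}\,\bigl(\prod_{i\in S}\TC_i\bigr)(M_{\mathbf{z}}^{\otimes 2})}$.

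It remains to evaluate the twirled measurement operator. For $i\in S$ the relevant single-qubit-pair factor is $\ketbra{z_i z_i}{z_i z_i}$, a product of two identical pure states that lies entirely in the symmetric subspace; hence $\tr{\Pi_+ \ketbra{z_i z_i}{z_i z_i}} = 1$ and $\tr{\Pi_- \ketbra{z_i z_i}{z_i z_i}} = 0$, so $\TC_i(\ketbra{z_i z_i}{z_i z_i}) = \tfrac{1}{3}\Pi_+$ regardless of the bit value $z_i$; the factors on $\overline{S}$ remain identities. Collecting the $s$ nontrivial factors yields $\bigl(\prod_{i\in S}\TC_i\bigr)(M_{\mathbf{z}}^{\otimes 2}) = 3^{-s}\prod_{i\in S}\Pi_+^i$, whence $\E_U[P_U(\mathbf{z})^2] = 3^{-s}\tr{\rho^{\otimes 2}\prod_{i\in S}\Pi_+^i}$. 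Substituting into Eq.~\eqref{eq:CE-local-symmetric-subspaces} proves the claim, and the $\mathbf{z}$-independence is manifest because $\ketbra{z_i z_i}{z_i z_i}$ is symmetric for either value of $z_i$.

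The main obstacle is the twirl computation: recognizing that the product structure of $U$ collapses the $2n$-qubit Haar average into local single-qubit twirls, and then invoking Schur--Weyl to fix the normalization $\tr{\Pi_+}=3$ that generates the $3^{s}$ prefactor. The remaining work is routine bookkeeping built on the SWAP-trick identity and the self-adjointness of the twirl channel.
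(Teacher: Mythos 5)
Your proposal is correct and follows essentially the same route as the paper's proof: lift $P_U(\mathbf{z})^2$ to a two-copy trace, factorize the Haar average into independent single-qubit twirls, and use Schur's lemma together with the fact that $\ketbra{z_i z_i}{z_i z_i}$ lies in the symmetric subspace to get $\tfrac{1}{3}\Pi_+$ per qubit, hence $\E_U[P_U(\mathbf{z})^2] = 3^{-s}\tr{\rho^{\otimes 2}\prod_{i\in S}\Pi_+^i}$. The only cosmetic difference is that you conclude by invoking Eq.~\eqref{eq:CE-local-symmetric-subspaces} directly, while the paper's appendix additionally re-expands $\Pi_+ = (\mathbb{I}+\mathbb{F})/2$ and applies the SWAP trick to land on the definition in Eq.~\eqref{eq:CE}; both closings are valid given the preliminaries.
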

The proof of this theorem relies on the fact that
\begin{align}
    \mathbb{E}_{U}[U^{\otimes 2}\ketbra{\mathbf{z}}{\mathbf{z}}U^{\dagger^{\otimes 2}}] &=\frac{1}{3}\Pi_{+},
\end{align}
which follows from Schur's lemma and Eq.~\eqref{eq:CE-local-symmetric-subspaces}. We include a detailed proof in Appendix~\ref{app:proof-Ohnemus-LRM-Hoeffding}. Crucially, we emphasize that this holds for \textit{any} fixed bitstring $\mathbf{z} \in \{0,1\}^{s}$. Because this expression is equally valid for all bitstrings, it follows that we can simply take a uniform average over all bitstrings
\begin{align}\label{eq:CE-LRM-exact-all-strings}
        \mathcal{C}_{\ket{\psi}}(S)=1-\left(\frac{3}{2}\right)^{s} \sum_{\mathbf{z} \in \{0,1\}^{s}}\E_U[P_{U}(\mathbf{z})^2],
\end{align}
which is a crucial experimental consideration because the probability of obtaining any particular bitstring decays exponentially with the number of qubits being probed. Moreover, taking such an average enables improved statistical estimators for two key reasons. First, as we will see below, the uniform average allows us to reduce the constant factor that appears in the sample complexity from $3^s$ to $\left(\frac{3}{2}\right)^s$. And second, the uniform average allows us to focus on estimating the sum of squared probabilities, rather than the squared probabilities themselves. As we will see, this is exponentially more efficient in terms of classical post-processing for many cases of interest.

\begin{figure}[ht!]
    \centering
    \includegraphics[width=0.45\textwidth]{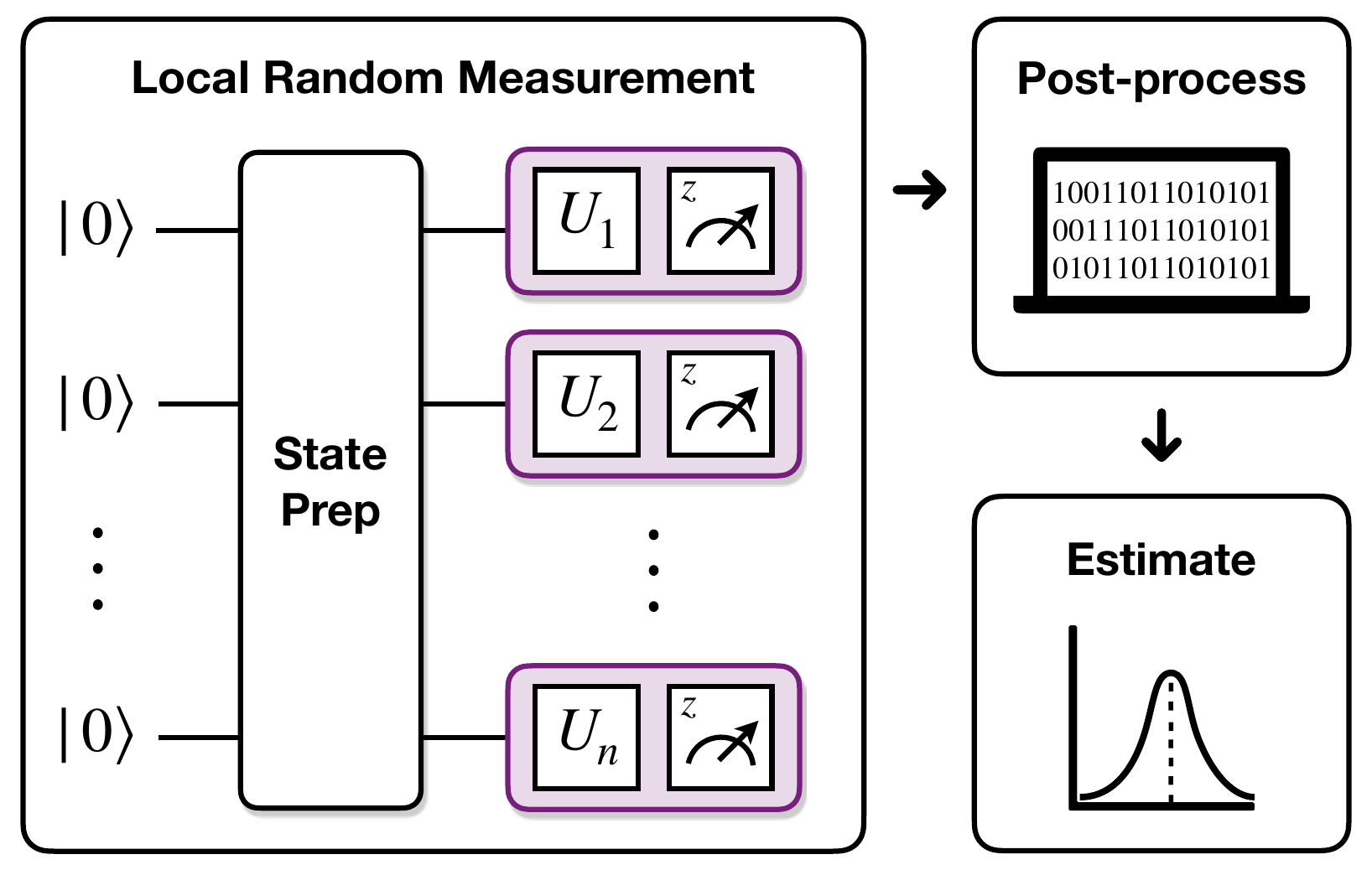}
    \caption{\textbf{Local randomized measurement.} A local randomized measurement (LRM) is simply a random local unitary on each subsystem, followed by a computational basis measurement.}
    \label{fig:LRM}
\end{figure}

In practice, one cannot compute this average exactly over the Haar measure, but instead samples $L$ unitaries from the Haar measure (or a unitary design of the appropriate degree \cite{mele2023Introduction}) and approximates the exact average. Moreover, Eq.~\eqref{eq:CE-LRM-exact-single-string} depends on the probability, at a fixed unitary, of obtaining the bitstring $\mathbf{z}$. This too, cannot be computed exactly, but is estimated by repeating, for each random unitary, the computational basis measurement $K$ times and taking the sample average. This yields a total measurement budget of $M=LK$. Because of these two sources of finite-sampling error, providing convergence guarantees for estimators based on LRM data can be challenging, with many groups resorting to various approximations, special cases, or numerical simulations \cite{brydges2019probing,elben2018renyi,elben2019statistical,vermersch2018unitary,ohnemus2023Quantifying}. While these methods are still informative, it would be preferable to obtain analytical performance guarantees. Before showing how to obtain such guarantees, let us understand the methods used in Ref.~\cite{ohnemus2023Quantifying}, where they construct unbiased estimators of $P^2_U(\mathbf{z})$ to then approximate the actual quantity of interest $\E_U[P_{U}(\mathbf{z})^2]$.

To understand their result, let $\bm{z} \in \{0,1\}^{s}$ denote a bitstring resulting from an LRM as depicted in Fig.~\ref{fig:LRM}.
Ohnemus \textit{et al.} \cite{ohnemus2023Quantifying} show that
\begin{align} \label{eq:p-squared-unbiased}
     \widehat{P}^{(2)}_U({\bm{z}}) &= \widehat{P}_U({\bm{z}})\frac{(K \widehat{P}_U({\bm{z}}) - 1)}{K - 1},
\end{align}
is an unbiased estimator of the squared probability $P_U(\bm{z})^2$, for a fixed unitary $U$.
Here, $\widehat{P}_U({\bm{z}})=\frac{1}{K}\sum_{k=1}^K \ind{\bm{Z}_k=\bm{z}}$ denotes the fraction of the $K$ outcomes equal to $\bm{z}$.
However, the classical run-time of estimating each probability is $O(K)$, and there are $2^{s}$ terms in the sum that appears in Eq.~\eqref{eq:CE-LRM-exact-all-strings}. Thus, the classical post-processing scales exponentially in the number of subsystems being probed, i.e. $O(K2^{s})$. Moreover, proving upper bounds on the sample complexity of the CEs is difficult if one estimates the squared probabilities. Hoeffding's inequality and the union bound give an analytical sample complexity upper bound, but it is very loose. Alternatively, one could attempt to compute or bound the variance of this estimator and use the Chebyshev-Cantelli inequality, as done in Ref.~\cite{ohnemus2023Quantifying}, but this requires control of up to the fourth moment of a multinomial distribution, which is difficult to compute for general input states. We address these limitations with a simple shift in the estimation procedure.

\subsubsection{LRM-Mean Estimator}
 Instead of estimating the probabilities, or even the squared probabilities, we estimate the \textit{expectation value of the sum of the squared probabilities} that appear in Eq.~\eqref{eq:CE-LRM-exact-all-strings}.
\begin{theorem}[Unbiased CE Estimation via LRM Data]\label{thm:Ohnemus-LRM-Hoeffding}
For each $K\geq 2$,
    \begin{align}
    \hat{\mathcal{C}}_{\ket{\psi}}(S) = 1 - \left(\frac{3}{2}\right)^{s}\frac{1}{L} \sum_{l=1}^L \hat{S}_{l}^{(K)}, \label{eq:Ohnemus-LRM-Hoeffding}
\end{align}
is an unbiased estimator of $\mathcal{C}_{\ket{\psi}}(S)$, where
\begin{align}
    \hat{S}_l^{(K)} &= \frac{1}{K (K - 1)} \sum_{\substack{k, k' = 1\\ k \neq k'}}^{K} \ind{\bm{Z}_{l,k} = \bm{Z}_{l,k'}}. \label{eqn:sumsqprobest}
\end{align}
Given a precision $\epsilon > 0$ and a confidence level of $1 - \delta \in (0, 1)$, using at most $L = O((\frac{9}{4})^{s} \log{(1/\delta)}/\epsilon^2)$ random unitaries, one is guaranteed 
\begin{align}
    \Pr[|\hat{\mathcal{C}}_{\ket{\psi}}(S) - \mathcal{C}_{\ket{\psi}}(S)| \geq \epsilon] \leq \delta.
\end{align}
\end{theorem}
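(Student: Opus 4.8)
The plan is to establish the two claims—unbiasedness and the sample-complexity bound—separately, the key realization being that $\hat{S}_l^{(K)}$ is a U-statistic estimator of the collision probability $\sum_{\mathbf{z}} P_U(\mathbf{z})^2$ at a fixed unitary, and that it is bounded in $[0,1]$ regardless of $K$. For unbiasedness I would condition on the random unitary $U_l$. For fixed $U_l$, the $K$ outcomes $\bm{Z}_{l,1}, \dotsc, \bm{Z}_{l,K}$ are i.i.d.\ draws from $P_{U_l}$, so for any ordered pair $k \neq k'$ the two draws are independent and
\begin{align}
\E\bigl[\ind{\bm{Z}_{l,k} = \bm{Z}_{l,k'}} \mid U_l\bigr] = \sum_{\mathbf{z} \in \{0,1\}^{s}} P_{U_l}(\mathbf{z})^2 .
\end{align}
Since $\hat{S}_l^{(K)}$ averages this indicator over the $K(K-1)$ ordered off-diagonal pairs, its conditional expectation is exactly $\sum_{\mathbf{z}} P_{U_l}(\mathbf{z})^2$, with no $K$-dependence; here the restriction $k \neq k'$ is essential, since the diagonal terms would each contribute a spurious $+1$ and bias the estimate upward. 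Taking the expectation over $U_l$ drawn from the Haar measure, averaging over $l$, and invoking Eq.~\eqref{eq:CE-LRM-exact-all-strings} then yields $\E[\hat{\mathcal{C}}_{\ket{\psi}}(S)] = \mathcal{C}_{\ket{\psi}}(S)$.

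For the concentration bound I would set $Y_l := (3/2)^{s}\,\hat{S}_l^{(K)}$, so that $\hat{\mathcal{C}}_{\ket{\psi}}(S) = 1 - \frac{1}{L}\sum_{l=1}^L Y_l$ and, by the computation above, each $Y_l$ shares the common mean $\E[Y_l] = 1 - \mathcal{C}_{\ket{\psi}}(S)$. Because the $Y_l$ use independently drawn unitaries and independent measurement shots they are i.i.d., and since $\hat{S}_l^{(K)}$ is an average of $\{0,1\}$-valued indicators it lies in $[0,1]$, so each $Y_l$ takes values in $[0,(3/2)^{s}]$. Applying Hoeffding's inequality (Proposition~\ref{fact:hoeffding}) with range $b-a = (3/2)^{s}$, and noting $|\hat{\mathcal{C}}_{\ket{\psi}}(S) - \mathcal{C}_{\ket{\psi}}(S)| = |\frac{1}{L}\sum_l Y_l - \E[Y_l]|$, shows that $L = \lceil (9/4)^{s}\log(2/\delta)/(2\epsilon^2)\rceil = O\!\left((9/4)^{s}\log(1/\delta)/\epsilon^2\right)$ random unitaries suffice.

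I do not expect a genuine obstacle; the content lies in the statistical modeling rather than in any hard inequality. The step demanding the most care is the dependency bookkeeping: the $K$ shots within a single batch $l$ are correlated, so Hoeffding cannot be applied shot-by-shot but must be applied to the $L$ batch-level variables $Y_l$, which are independent across $l$. The conceptual payoff worth emphasizing is that, because $\hat{S}_l^{(K)}$ is always an average of indicators, its range is $[0,1]$ for every $K \geq 2$; this is precisely why the bound on $L$ carries no $K$-dependence, and why reformulating the task in terms of the collision probability—rather than estimating individual squared probabilities as in Eq.~\eqref{eq:p-squared-unbiased}—is what tames the sample complexity.
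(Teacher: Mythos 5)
Your proposal is correct, and its concentration step is identical to the paper's: both exploit that $\hat{S}_l^{(K)}$, being an average of indicators, lies in $[0,1]$ for every $K\geq 2$, and both apply Hoeffding's inequality (Proposition~\ref{fact:hoeffding}) to the $L$ independent batch-level variables with range $(3/2)^{s}$, yielding $L = O\bigl((9/4)^{s}\log(1/\delta)\epsilon^{-2}\bigr)$ with no $K$-dependence. Where you diverge is in how unbiasedness is established. The paper works in the opposite direction from you: it starts from the per-bitstring estimator $\widehat{P}^{(2)}_l(\bm{z}) = \widehat{P}_l(\bm{z})(K\widehat{P}_l(\bm{z})-1)/(K-1)$ of Ohnemus \textit{et al.}, invokes (by citation) that this is unbiased for $\mathbb{E}_U[P_U(\bm{z})^2]$, sums over all $2^{s}$ bitstrings, and then shows algebraically that this sum collapses to the pair-collision form in Eq.~\eqref{eqn:sumsqprobest}, via $\sum_{\bm{z}} \ind{\bm{Z}_{l,k} = \bm{z}}\ind{\bm{Z}_{l,k'} = \bm{z}} = \ind{\bm{Z}_{l,k} = \bm{Z}_{l,k'}}$. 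You instead take the collision form as primitive and compute $\mathbb{E}\bigl[\hat{S}_l^{(K)} \mid U_l\bigr]$ directly: conditional on $U_l$ the shots are i.i.d., so each off-diagonal indicator has conditional mean $\sum_{\bm{z}} P_{U_l}(\bm{z})^2$, and the law of total expectation together with Eq.~\eqref{eq:CE-LRM-exact-all-strings} finishes the argument. Your route is more elementary and fully self-contained—it never needs the unbiasedness of $\widehat{P}^{(2)}_l(\bm{z})$, which in the paper rests on an external citation. What the paper's route buys in exchange is the explicit equivalence between $\hat{S}_l^{(K)}$ and the summed Ohnemus estimators, which underwrites the main text's claim that the two are "mathematically equivalent" and its comparison of classical post-processing costs ($O(K 2^{s})$ versus $O(K^2 s)$); in your treatment that connection would have to be argued separately if needed.
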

\begin{figure}
    \centering
    \includegraphics[width=0.45\textwidth]{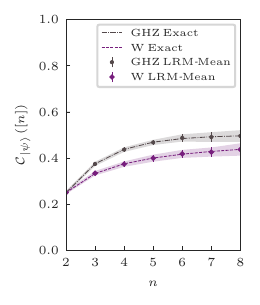}
    \caption{\textbf{CEs via LRM data.} Simulation of LRM experiment with $L = 10^4$ local Haar random unitaries and $K=2$ shots per unitary estimating the CE of the GHZ and W states on $n$ qubits. We see complete agreement with analytical results, confirming our estimator is unbiased. See Appendix~\ref{app:LRM-Simulations} for details.} 
    \label{fig:CE-LRM-GHZ-W}
\end{figure}

A detailed proof of this theorem and a derivation of the estimator can be found in Appendix~\ref{app:proof-Ohnemus-LRM-Hoeffding}. We remark that $\hat{S}_l^{(K)}$ is mathematically equivalent to summing Eq.~\eqref{eq:p-squared-unbiased} over all bitstrings, but writing it as $\hat{S}_l^{(K)}$ allows for faster classical post-processing for a constant $K$. As mentioned previously, computing each probability and summing would scale as $O(K 2^s)$, which would become a bottleneck for moderate system sizes. In contrast, computing $\hat{S}_l^{(K)}$ directly can be done in $O(K^2 s)$. For constant $K$, this represents an exponential improvement in classical post-processing. 

Importantly, the concentration inequality in Theorem~\ref{thm:Ohnemus-LRM-Hoeffding} holds for all $K \geq 2$. Thus, using a given unitary for more than $2$ shots would increase the total measurement budget without providing provably better concentration. Moreover, the computational post-processing cost increases with $K$. To see this, note that the sum $\hat{S}_l^{(K)}$ is computed by checking the number of distinct pairs of trials $(k, k')$ that output the same string, and thus, has time complexity $O(K^2 s)$. This, too, indicates that one should choose $K=2$. Note that this does not imply this is the optimal division of the total measurement budget. We will see in the next section that the sample complexity could be marginally improved by increasing $K$, but with diminishing returns after a certain value. Finally, although the classical post-processing is efficient, the sample complexity upper bound scales as $\sim (\frac{9}{4})^{s}$, which can become prohibitive for tens of qubits. As such, we now show how to quadratically improve this sample complexity using an alternative estimation procedure.

\subsubsection{LRM-MoM Estimator \label{sec:LRM-Median-of-Means}}
As mentioned in Sec.~\ref{sec:classical-stats}, Hoeffding's inequality holds for all independent random variables that are bounded. However, if the range of values the random variable takes is large relative to its standard deviation, and one can compute exactly or obtain a good upper bound on the variance of the random variable, median-of-means estimation can give tighter concentration. To apply Prop.~\ref{prop:med-of-means} to our LRM estimator, we have to compute the variance of $\hat{S}_l^{(K)}$ given in Theorem~\ref{thm:Ohnemus-LRM-Hoeffding}. As we show in Appendix~\ref{app:proof-CE-via-LRMs-MoM}, the variance can be expressed as 
\begin{align} \label{eq:var-S_l^K}
\begin{split}
    \var[\hat{S}_l^{(K)}] = \frac{2 P_2 (1 - P_2) + 4 (K - 2) (P_3 - P_2^2)}{K (K - 1)}\\
    +\frac{(K - 2) (K - 3) (P_{2, 2} - P_2^2)}{K(K-1)},
\end{split}
\end{align}
where $P_2:=\E_U[\sum_{\bm{z}}P_U(\bm{z})^2]$, $P_3:=\E_U[\sum_{\bm{z}}P_U(\bm{z})^3]$, $P_{2,2}:=\E_U[(\sum_{\bm{z}}P_U(\bm{z}))^2]$. Note, importantly, that this variance is independent of $l$ and the second additive term dominates in the large $K$ limit, approaching $P_{2,2}-P_2^2$. Thus, increasing $K$ past some threshold value will not lead to an appreciable decrease in the variance. Moreover, if one wishes to bound the variance for arbitrary states, it is necessary to bound the fourth moment implicit in $P_{2, 2}$. While local Cliffords are a common choice for implementing LRMs, they only form a $3$-design, so one would need to either find a method to bound $P_{2, 2}$ for (local) Clifford measurements, or perform (local) Haar random measurements, which can be more experimentally challenging.  We therefore restrict to $K = 2$, which allows us to use local Clifford measurements, bound the variance for arbitrary states, and provably perform better than the estimation procedure given in Theorem~\ref{thm:Ohnemus-LRM-Hoeffding}.
In addition, this allows for faster classical post-processing, as explained in the previous section.

Thus, choosing $K = 2$ (i.e. employing only two measurements per unitary), Eq.~\eqref{eq:var-S_l^K} reduces to
\begin{align}\label{eq:var-S_l^2}
    \var[\hat{S}_{l}^{(2)}] = P_2 (1-P_2).
\end{align}
Note that Eq.~\eqref{eq:CE-LRM-exact-all-strings} and the definition of $P_2$ above imply that the CE can be expressed as $\CC_{\ket{\psi}} (S) = 1-\left(\frac{3}{2}\right)^{s} P_2$.
Then, because $\CC_{\ket{\psi}} (S) \geq 0$, we have $P_2 \leq \left(\frac{2}{3}\right)^{s} $, which coupled with Eq.~\eqref{eq:var-S_l^2}, implies that  $\var[\hat{S}_{l}^{(2)}] \leq \left(\frac{2}{3}\right)^{s}$.
Noting that we can write the estimator in Eq.~\eqref{eq:Ohnemus-LRM-Hoeffding} as $\hat{\mathcal{C}}_{\ket{\psi}}(S) = \sum_{i = 1}^L \hat{\mathcal{C}}_l^{(K)} / L$ with $\hat{\mathcal{C}}_l^{(K)} = 1 - (3/2)^{s} \hat{S}_l^{(K)}$, we obtain
\begin{align}
     \var[\hat{C}_{l}^{(2)}] = \left(\frac{9}{4}\right)^{s} \var[\hat{S}_{l}^{(2)}] \leq \left(\frac{3}{2}\right)^{s}. \label{eqn:CE_LRM_K2_var_bound}
\end{align}

This allows us to directly apply Prop.~\ref{prop:med-of-means}, which we state as our second theorem.
\begin{theorem}[CE Estimation via MoM and LRMs]\label{thm:CE-via-LRMs-MoM}
    Given precision of $\epsilon > 0$ and a confidence level $1 - \delta \in (0, 1)$, we randomly sample a total of $L=N_B B$ local unitaries of the form $U = \prod_{i \in S} U_i$,
    where $N_B=\lceil 8 \log(\frac{1}{\delta})\rceil$ and $B=\lceil 4 \left(\frac{3}{2}\right)^{s}\epsilon^{-2}\rceil$.
    Measuring $K=2$ outcomes per unitary, we denote the outcomes from the $L$ LRMs as $\bm{Z}_{1,1},\bm{Z}_{1,2},\ldots,\bm{Z}_{L,1},\bm{Z}_{L,2}$.
    Breaking these $2 N_B B$ outcomes into $N_B$ batches of size $2 B$, for each $1 \leq b \leq N_B$, compute the empirical mean
    \begin{align}
        \overline{\CC}_{\ket{\psi}}^{(b)}(S)=1-\left(\frac{3}{2}\right)^{s}\frac{1}{B}\sum_{l=(b-1)B+1}^{bB}\ind{\bm{Z}_{l,1}=\bm{Z}_{l,2}}. \label{eqn:LRM_batch_mean}
    \end{align}
    Then, given at most $O\left(\left(\frac{3}{2}\right)^{s} \log\left(\frac{1}{\delta}\right) \epsilon^{-2}\right)$ measurement outcomes, one is guaranteed to have
    \begin{align}
        \Pr[|\med[\overline{\CC}^{(1)}_{\ket{\psi}}(S),\ldots,\overline{\CC}^{(N_B)}_{\ket{\psi}}(S)]- \mathcal{C}_{\ket{\psi}}(S)| \geq \epsilon] \leq \delta.
    \end{align}
\end{theorem}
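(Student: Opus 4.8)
The plan is to recognize the median-of-means estimator in the statement as a direct instance of Proposition~\ref{prop:med-of-means} applied to the per-unitary estimators $\hat{\CC}_l^{(2)} = 1 - (3/2)^{s}\hat{S}_l^{(2)}$ introduced in Theorem~\ref{thm:Ohnemus-LRM-Hoeffding}. First I would note that, specializing Eq.~\eqref{eqn:sumsqprobest} to $K=2$, the sum over distinct pairs collapses to a single indicator, so that $\hat{S}_l^{(2)} = \ind{\bm{Z}_{l,1}=\bm{Z}_{l,2}}$ and hence $\hat{\CC}_l^{(2)} = 1 - (3/2)^{s}\ind{\bm{Z}_{l,1}=\bm{Z}_{l,2}}$. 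The batch means $\overline{\CC}_{\ket{\psi}}^{(b)}(S)$ of Eq.~\eqref{eqn:LRM_batch_mean} are then exactly the empirical means of consecutive length-$B$ blocks of the sequence $\hat{\CC}_1^{(2)}, \dotsc, \hat{\CC}_L^{(2)}$, which is precisely the quantity whose median Proposition~\ref{prop:med-of-means} controls.

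Next I would verify the hypotheses of Proposition~\ref{prop:med-of-means}. Independence and identical distribution follow because the $L$ unitaries are drawn independently from the same single-qubit distribution and the two shots per unitary are independent measurements, so the variables $\hat{\CC}_l^{(2)}$ are i.i.d. Unbiasedness, $\E[\hat{\CC}_l^{(2)}] = \CC_{\ket{\psi}}(S)$, is inherited from Theorem~\ref{thm:Ohnemus-LRM-Hoeffding} together with Eq.~\eqref{eq:CE-LRM-exact-all-strings}. For the variance I would invoke Eq.~\eqref{eq:var-S_l^2}, namely $\var[\hat{S}_l^{(2)}] = P_2(1-P_2)$, and the nonnegativity of the CE, which via $\CC_{\ket{\psi}}(S) = 1-(3/2)^{s}P_2 \geq 0$ gives $P_2 \leq (2/3)^{s}$ and therefore the uniform bound $\var[\hat{\CC}_l^{(2)}] \leq (3/2)^{s}$ recorded in Eq.~\eqref{eqn:CE_LRM_K2_var_bound}. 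This fixes the variance proxy $\sigma^2 = (3/2)^{s}$.

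With these in hand, I would apply Proposition~\ref{prop:med-of-means} with $\sigma^2 = (3/2)^{s}$, so that its prescription $B = \lceil 4\sigma^2/\epsilon^2 \rceil$ becomes exactly $B = \lceil 4(3/2)^{s}\epsilon^{-2}\rceil$ while the number of batches is $N_B = \lceil 8\log(1/\delta)\rceil$, matching the statement verbatim. The proposition then yields $\Pr[|\med(\overline{\CC}_{\ket{\psi}}^{(1)}(S),\dotsc,\overline{\CC}_{\ket{\psi}}^{(N_B)}(S)) - \CC_{\ket{\psi}}(S)| \geq \epsilon] \leq \delta$. Finally I would count resources: the total number of sampled unitaries is $L = N_B B = O((3/2)^{s}\log(1/\delta)\epsilon^{-2})$, and since $K=2$ the total number of measurement outcomes is $2L$, which is of the same order.

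I expect the genuinely substantive work to lie not in this theorem's proof but in the variance identity of Eq.~\eqref{eq:var-S_l^2} (and its general-$K$ parent Eq.~\eqref{eq:var-S_l^K}), whose derivation is deferred to the appendix; once that bound and the i.i.d. structure are granted, the argument above is essentially bookkeeping. The one point requiring care is to reconcile the roles of $N_B$ and $B$ in Proposition~\ref{prop:med-of-means} with the indexing used here, ensuring that the median is taken over $N_B$ batch means, each averaging $B$ independent per-unitary estimators, so that the total budget collapses to the claimed $O((3/2)^{s}\log(1/\delta)\epsilon^{-2})$.
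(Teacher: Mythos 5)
Your proposal is correct and follows essentially the same route as the paper's proof: unbiasedness of $\hat{\CC}_l^{(2)} = 1-(3/2)^{s}\ind{\bm{Z}_{l,1}=\bm{Z}_{l,2}}$ inherited from Theorem~\ref{thm:Ohnemus-LRM-Hoeffding}, the variance bound $\var[\hat{\CC}_l^{(2)}]\leq (3/2)^{s}$ from Eq.~\eqref{eqn:CE_LRM_K2_var_bound} (whose substance lives in Lemma~\ref{lem:sumsqprobest_var}), and a direct application of Proposition~\ref{prop:med-of-means}. Your closing remark about reconciling the roles of $N_B$ and $B$ is well taken, since Proposition~\ref{prop:med-of-means} as stated swaps the labels ``number of batches'' and ``samples per batch'' relative to how the theorem uses them, but this is a notational slip in the paper rather than a gap in your argument.
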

The proof of this theorem can be found in Appendix~\ref{app:proof-CE-via-LRMs-MoM}, and is a direct application of Proposition~\ref{prop:med-of-means},
using the variance bound in Eq.~\eqref{eqn:CE_LRM_K2_var_bound}. While the classical post-processing needed to compute this estimator is no worse than the estimator in Theorem~\ref{thm:Ohnemus-LRM-Hoeffding}, it improves on the sample complexity by a factor of $\left(\frac{3}{2}\right)^{s}$, which effectively doubles the size of the state one could probe with comparable resources (asymptotically).

Despite this significant improvement in sample complexity and classical post-processing, LRMs still require an exponential number of measurement settings which is time consuming experimentally. To address this limitation, we now show how to de-randomize these protocols and probe entanglement using a single experimental setting.

\subsection{CEs via Projective 2-designs}
As mentioned previously, the sample complexity of an estimation protocol is not the only relevant experimental consideration. While each measurement setting used in LRMs is easily implementable, we saw above that guaranteeing $\epsilon$-close estimation of the CEs requires an exponential number of measurement settings. Inspired by Ref.~\cite{stricker2022Experimental}, we propose a ``de-randomization'' of the LRM protocol that requires only a single experimental setting. That such a protocol is possible follows from Eq.~\eqref{eq:CE-local-symmetric-subspaces} and the definition of a \textit{projective 2-design}~\cite{scott2006Tight} as stated below.

\begin{definition}[Projective 2-design]\label{def:Projective-2-designs}
    A projective 2-design is a probability distribution over $N$ quantum states, $\{p_i,\ket{\phi_i}\}_{i=1}^N$, such that
    \begin{align}
        \sum_i p_i (\ket{\phi_i}\bra{\phi_i})^{\otimes 2}
        &=\int_{\mathrm{Haar}} (\ket{\psi}\bra{\psi})^{\otimes 2} d\psi
    \end{align}
    where integration in the right-hand expression is with respect to the Haar measure.
\end{definition}
From this definition and Schur's lemma, we have
\begin{align} \label{eq:2-designs-equate-symmetric}
    \mathbb{E}_{\phi}[(\ketbra{\phi}{\phi})^{\otimes 2}]&= \frac{1}{3}\Pi_+,
\end{align}
where $\mathbb{E}_{\phi}[(\ketbra{\phi}{\phi})^{\otimes 2}]:=\sum_i p_i (\ket{\phi_i}\bra{\phi_i})^{\otimes 2}$.
Note that the state in $\mathbb{E}_{\phi}[(\ketbra{\phi}{\phi})^{\otimes 2}]$ denotes a random variable while $\ketbra{\phi_i}{\phi_i}$ is a specific state.
Thus, we may estimate all of the CEs by appropriately post-processing the measurement data from any POVM that also forms a projective 2-design. This motivates the following theorem.

\begin{theorem}[CEs via Projective 2-designs]\label{thm:CE-via-designs-exact}
    Let $\{p_i,\ket{\phi_i}\}_{i=1}^N$ be a single-qubit projective 2-design with $N$ elements. Further, let $\ketbra{\Phi_\mathbf{q}}{\Phi_\mathbf{q}}=\prod_{i\in \mathbf{q}} \ketbra{\phi_i}{\phi_i}$ denote the projector onto output string $\mathbf{q}\in\{1,2,\ldots,N \}^{s}$ which occurs with probability $P(\mathbf{q})=\tr{\rho \ketbra{\Phi_\mathbf{q}}{\Phi_\mathbf{q}}}$ when measuring the qubits in the set $S$, given the state $\rho$. Then, the CEs can be written as
    \begin{align}
        \CC_{\ket{\psi}}(S)=1-3^{s} \mathbb{E}_{\Phi}[\tr{\rho\ketbra{\Phi}{\Phi}}^2],
    \end{align}
    where $\mathbb{E}_{\Phi}[\cdot]$ denotes an expectation over the 2-design.
\end{theorem}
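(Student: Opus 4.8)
The plan is to start from the two-copy symmetric-subspace formula for the CE, Eq.~\eqref{eq:CE-local-symmetric-subspaces}, and replace each single-qubit symmetric projector by its 2-design representation, Eq.~\eqref{eq:2-designs-equate-symmetric}. Concretely, I would begin with
\begin{align}
    \CC_{\ket{\psi}}(S) = 1 - \tr{\rho^{\otimes 2}\prod_{i \in S}\Pi_+^i},
\end{align}
and note that Eq.~\eqref{eq:2-designs-equate-symmetric} lets me rewrite, for each qubit $i \in S$, the symmetric projector as $\Pi_+^i = 3\sum_{j}p_j\,(\ketbra{\phi_j}{\phi_j})^{\otimes 2}$ acting on the $i$-th qubit of the two copies (with identities on the remaining qubits, per the paper's operator convention).

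The central step is a bookkeeping argument. Since the projectors $\Pi_+^i$ for distinct $i$ act on disjoint qubit pairs and hence commute, the product factorizes, and collecting the factor of $3$ from each of the $s$ qubits produces an overall prefactor $3^s$. Introducing a string $\mathbf{q} = (q_i)_{i \in S} \in \{1,\dots,N\}^s$ with weight $p_\mathbf{q} = \prod_{i \in S}p_{q_i}$, I would regroup the copy-1 and copy-2 tensor factors to obtain
\begin{align}
    \prod_{i \in S}\Pi_+^i = 3^s\sum_{\mathbf{q}} p_\mathbf{q}\,(\ketbra{\Phi_\mathbf{q}}{\Phi_\mathbf{q}})^{\otimes 2},
\end{align}
where $\ket{\Phi_\mathbf{q}} = \prod_{i \in S}\ket{\phi_{q_i}}$ is the product state defined in the statement. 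This reorganization is where I expect to have to be most careful: one must keep straight that $\Pi_+^i$ couples the $i$-th qubit of copy~1 to the $i$-th qubit of copy~2, so that after expanding the product the copy-1 factors collect into a single $\ketbra{\Phi_\mathbf{q}}{\Phi_\mathbf{q}}$ and the copy-2 factors into another, rather than remaining interleaved pairwise.

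Finally, I would substitute this expansion back into the trace and use the elementary identity $\tr{(A\otimes A)(B\otimes B)} = \tr{AB}^2$ with $A = \rho$ and $B = \ketbra{\Phi_\mathbf{q}}{\Phi_\mathbf{q}}$, giving $\tr{\rho^{\otimes 2}(\ketbra{\Phi_\mathbf{q}}{\Phi_\mathbf{q}})^{\otimes 2}} = \tr{\rho\ketbra{\Phi_\mathbf{q}}{\Phi_\mathbf{q}}}^2 = P(\mathbf{q})^2$. Recognizing the resulting weighted sum $\sum_{\mathbf{q}}p_\mathbf{q}(\cdot) = \mathbb{E}_\Phi[\cdot]$ as the expectation over the product 2-design then yields
\begin{align}
    \CC_{\ket{\psi}}(S) = 1 - 3^s\,\mathbb{E}_\Phi[\tr{\rho\ketbra{\Phi}{\Phi}}^2],
\end{align}
as claimed. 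The only genuine subtlety is the tensor-factor bookkeeping in the middle step; the opening substitution and the closing trace identity are routine.
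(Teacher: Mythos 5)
Your proposal is correct and is essentially the paper's own argument: both proofs combine the 2-design identity of Eq.~\eqref{eq:2-designs-equate-symmetric}, the factorization of the design average across the independent qubits in $S$, and the identity $\tr{(A\otimes A)(B\otimes B)} = \tr{AB}^2$, anchored to the two-copy formula of Eq.~\eqref{eq:CE-local-symmetric-subspaces}. The only difference is direction---the paper starts from $\mathbb{E}_{\Phi}[\tr{\rho\ketbra{\Phi}{\Phi}}^2]$ and collapses the design average into $\prod_{i\in S}\Pi_+^i$ (then recovers the purity sum via the SWAP trick), whereas you expand the projector product into the design average---an immaterial reversal of the same computation.
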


The proof is a direct application of Eq.~\eqref{eq:2-designs-equate-symmetric} and Eq.~\eqref{eq:CE-local-symmetric-subspaces}, keeping in mind that each $\ketbra{\phi}{\phi}$ is independent. The details are provided in Appendix~\ref{app:proof-CE-via-designs-exact}.

While Thm.~\ref{thm:CE-via-designs-exact} holds for all projective 2-designs, we will restrict our attention to SIC POVMs, because they are minimal among single-qubit projective $2$-designs. That is, they saturate the lower bound on the number of elements, $N$, needed to form a projective $t$-design in a $d$-dimensional space~\cite{scott2006Tight} 
\begin{align}
    N \geq \binom{d+\lceil t/2 \rceil -1}{\lceil t/2 \rceil}\binom{d+\lfloor t/2 \rfloor -1}{\lfloor t/2 \rfloor},
\end{align}
which becomes $N \geq 4$ for $2$-designs for the $2$-dimensional case on which we focus. Their minimal nature has important bearing on experimental implementations. Because the size of ancillary space required for implementation through Neumark's (Naimark's) Theorem (see Supp. Theorem~\ref{sec:Neumark}) is proportional to the number of design elements~\cite{chen2007ancilla}, SIC POVMs are the cheapest to implement experimentally. We refer the reader to Refs.~\cite{renes2004symmetric,scott2006Tight,fuchs2017sic} for more about SICs generally, and to Ref.~\cite{scott2006Tight} for a detailed overview of SICs and projective 2-designs as we use them. 

\begin{figure}
    \centering
    \includegraphics[width=0.45\textwidth]{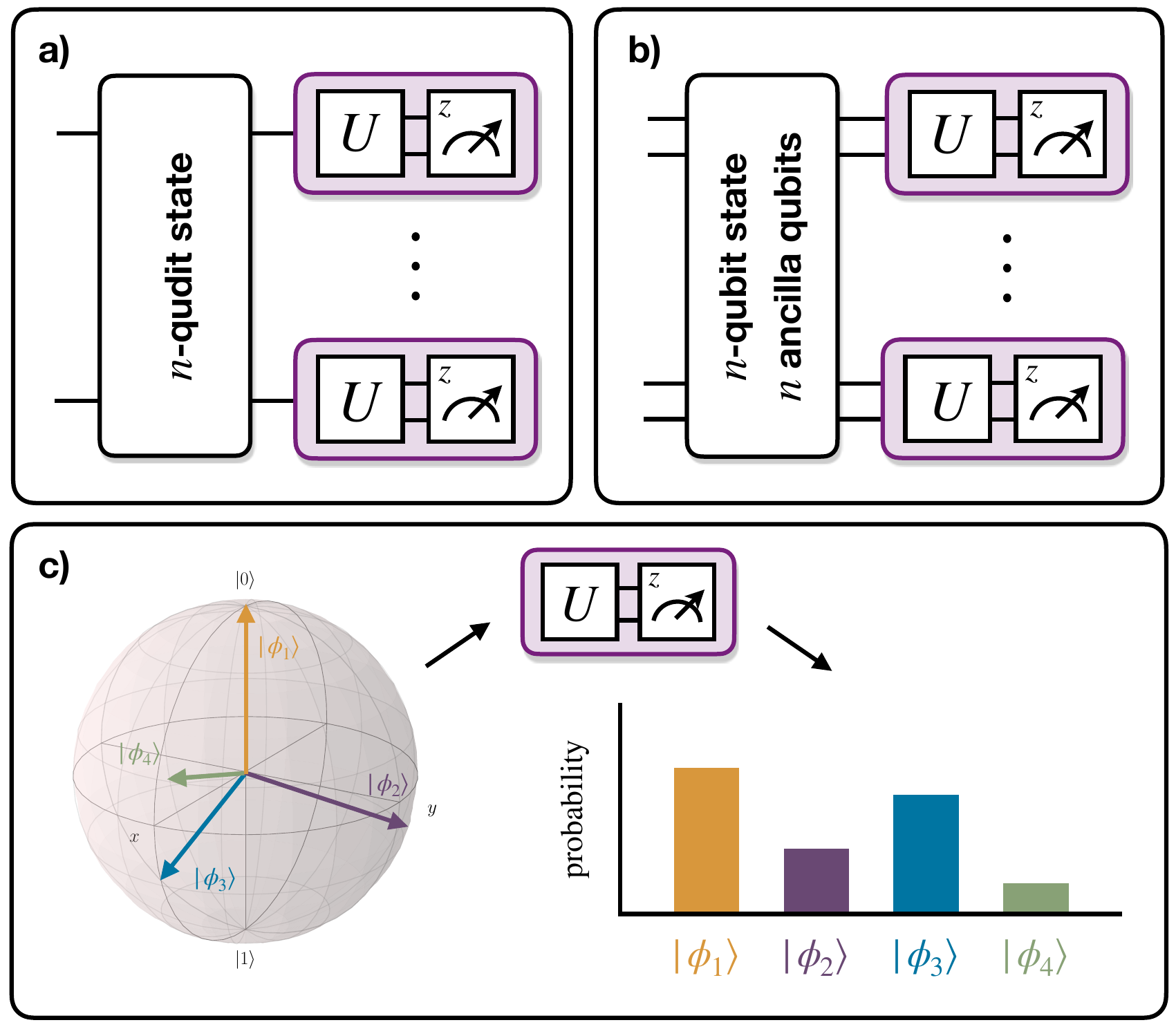}
    \caption{\textbf{Local SIC-POVM Implementation.} a) Circuit diagram showing local SIC-POVM implementation by encoding an $n$-qubit state in an $n$-qudit state. This has been implemented recently in both superconducting transmon \cite{fischer2022ancilla-free} and ion trap quantum systems \cite{stricker2022Experimental}. b) Instead of encoding qubit states in qudit states, one could utilize ancillary qubits to implement the SIC-POVM. This may be preferable in neutral atom systems as in Ref.~\cite{bluvstein2022quantum}. c) Bloch sphere representation of a single-qubit SIC-POVM. }
    \label{fig:SIC}
\end{figure}
For qubits, and as depicted in Fig.~\ref{fig:SIC}c), the simplest SIC has the following 4 elements
\begin{align}
\begin{split}
    \ket{\phi_1}&=\ket{0},\\
    \ket{\phi_2}&=\frac{1}{\sqrt{3}}\ket{0}+\sqrt{\frac{2}{3}}\ket{1},\\
    \ket{\phi_3}&=\frac{1}{\sqrt{3}}\ket{0}+\sqrt{\frac{2}{3}}e^{i2\pi/3}\ket{1},\\
    \ket{\phi_4}&=\frac{1}{\sqrt{3}}\ket{0}+\sqrt{\frac{2}{3}}e^{i4\pi/3}\ket{1},
\end{split}
\end{align}
which obey the following relation,
\begin{align}
    \frac{1}{4}\sum_{i=1}^4 (\ketbra{\phi_i}{\phi_i})^{\otimes 2} = \frac{1}{3}\Pi_+.
\end{align}
By defining $\ket{\Tilde{\phi}_i} = \frac{1}{\sqrt{2}}\ket{\phi_i}$, we can turn these SIC elements into a well-defined POVM satisfying
\begin{align}
    \sum_{i=1}^4 \ketbra{\Tilde{\phi}_i}{\Tilde{\phi}_i} &= \mathbb{I}.
\end{align}
With this notation in place, we can state a corollary to Theorem~\ref{thm:CE-via-designs-exact} that yields a very simple expression for the CEs via local SICs. 

\begin{corollary}[CEs via local SICs]\label{thm:CEs-via-SICs}
    Let $\{\ketbra{\Tilde{\phi}_i}{\Tilde{\phi}_i}\}_{i=1}^4$ be a single-qubit SIC-POVM and $\ketbra{\Tilde{\Phi}_{\mathbf{q}}}{\Tilde{\Phi}_{\mathbf{q}}}=\prod_{i \in \bm{q}} \ketbra{\Tilde{\phi}_i}{\Tilde{\phi}_i}$ denote the projector onto output string $\mathbf{q}\in\{1,2,3,4\}^{s}$ which occurs with probability,
    \begin{align}\label{eq:SIC-outcome-prob-via-POVM}
        P(\mathbf{q})=\tr{\rho \ketbra{\Tilde{\Phi}_{\mathbf{q}}}{\Tilde{\Phi}_{\mathbf{q}}}}.
    \end{align}
    Then the CEs can be written as,
    \begin{align}\label{eq:CE-via-SICs}
        \CC_{\ket{\psi}}(S)=1-3^{s}\sum_{\mathbf{q}}P(\mathbf{q})^2.
    \end{align}
\end{corollary}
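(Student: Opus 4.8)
The plan is to obtain the corollary as a direct specialization of Theorem~\ref{thm:CE-via-designs-exact} to the SIC, the only real work being a careful conversion between the two normalization conventions at play: the projective 2-design in Theorem~\ref{thm:CE-via-designs-exact} is phrased in terms of the normalized states $\ket{\phi_i}$, whereas the POVM in the corollary is built from the rescaled vectors $\ket{\tilde{\phi}_i} = \ket{\phi_i}/\sqrt{2}$. First I would observe that the four SIC states, taken with uniform weights $p_i = 1/4$, constitute a single-qubit projective 2-design: the relation $\frac{1}{4}\sum_{i=1}^4 (\ketbra{\phi_i}{\phi_i})^{\otimes 2} = \frac{1}{3}\Pi_+$ displayed just above exactly matches Eq.~\eqref{eq:2-designs-equate-symmetric}, so Definition~\ref{def:Projective-2-designs} is satisfied and Theorem~\ref{thm:CE-via-designs-exact} applies verbatim.

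Next I would make the expectation $\mathbb{E}_\Phi$ explicit. Because each of the $s$ qubit labels in $S$ draws its design element independently and uniformly from the four options, the string $\mathbf{q}\in\{1,2,3,4\}^{s}$ carries weight $p_{\mathbf{q}} = 4^{-s}$, giving $\mathbb{E}_{\Phi}[\tr{\rho\ketbra{\Phi}{\Phi}}^2] = 4^{-s}\sum_{\mathbf{q}}\tr{\rho\ketbra{\Phi_\mathbf{q}}{\Phi_\mathbf{q}}}^2$, where $\ketbra{\Phi_\mathbf{q}}{\Phi_\mathbf{q}} = \prod_{i\in\mathbf{q}}\ketbra{\phi_i}{\phi_i}$ is built from the normalized SIC vectors.

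The key step is then to relate this to the POVM outcome probabilities $P(\mathbf{q}) = \tr{\rho\ketbra{\tilde{\Phi}_\mathbf{q}}{\tilde{\Phi}_\mathbf{q}}}$ appearing in Eq.~\eqref{eq:SIC-outcome-prob-via-POVM}. Since $\ketbra{\tilde{\phi}_i}{\tilde{\phi}_i} = \frac{1}{2}\ketbra{\phi_i}{\phi_i}$ and the $s$-fold product over the qubits in $S$ accumulates one such factor per qubit, I get $\ketbra{\tilde{\Phi}_\mathbf{q}}{\tilde{\Phi}_\mathbf{q}} = 2^{-s}\ketbra{\Phi_\mathbf{q}}{\Phi_\mathbf{q}}$, hence $\tr{\rho\ketbra{\Phi_\mathbf{q}}{\Phi_\mathbf{q}}} = 2^{s} P(\mathbf{q})$. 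Substituting, $\mathbb{E}_{\Phi}[\tr{\rho\ketbra{\Phi}{\Phi}}^2] = 4^{-s}\sum_{\mathbf{q}}(2^{s} P(\mathbf{q}))^2 = \sum_{\mathbf{q}}P(\mathbf{q})^2$, since the $4^{-s}$ sampling weight cancels the $(2^{s})^2 = 4^{s}$ produced by the squaring. Feeding this into Theorem~\ref{thm:CE-via-designs-exact} yields $\CC_{\ket{\psi}}(S) = 1 - 3^{s}\sum_{\mathbf{q}}P(\mathbf{q})^2$, which is exactly Eq.~\eqref{eq:CE-via-SICs}.

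I do not expect a genuine obstacle, since everything reduces to the already-established Theorem~\ref{thm:CE-via-designs-exact}; the one place to be careful is the bookkeeping of the two prefactors — the $4^{-s}$ sampling weight of the uniform 2-design and the $2^{-s}$ from the POVM normalization $\ket{\tilde{\phi}_i} = \ket{\phi_i}/\sqrt{2}$ — and in confirming that they combine to leave a bare sum $\sum_{\mathbf{q}}P(\mathbf{q})^2$ with no residual $s$-dependent constant beyond the $3^{s}$ already present in the parent theorem.
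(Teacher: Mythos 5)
Your proposal is correct and takes the same route as the paper: the paper presents this corollary as an immediate specialization of Theorem~\ref{thm:CE-via-designs-exact} to the uniform-weight SIC 2-design, using the relation $\frac{1}{4}\sum_{i=1}^4 (\ketbra{\phi_i}{\phi_i})^{\otimes 2} = \frac{1}{3}\Pi_+$ and the rescaling $\ket{\tilde{\phi}_i}=\ket{\phi_i}/\sqrt{2}$ stated just before it. Your explicit bookkeeping — the $4^{-s}$ sampling weight canceling the $4^{s}$ from $\tr{\rho\ketbra{\Phi_\mathbf{q}}{\Phi_\mathbf{q}}}=2^{s}P(\mathbf{q})$ squared — is exactly the step the paper leaves implicit, and it is carried out correctly.
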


Because we do not have to average over unitaries as in LRMs, the finite sampling error incurred when estimating Eq.~\eqref{eq:CE-via-SICs} will only be due to approximating the probability of obtaining a particular bitstring. This important distinction between SICs and LRMs means we just need to estimate the sum of the squared probabilities obtained from a local SIC measurements. While this theorem implies one can quantify multipartite entanglement with a single experimental measurement setting, this simplicity comes at the cost of implementing a generalized POVM. 

As shown schematically in Fig.~\ref{fig:SIC}, this can be achieved by employing one additional ancilla qubit for each system qubit~\cite{chen2007ancilla} or, as has been demonstrated experimentally in Refs.~\cite{stricker2022Experimental,fischer2022ancilla-free}, by encoding qubit states into 4-dimensional qudits (ququarts). Experimentally, Cor.~\ref{thm:CE-via-designs-exact} amounts to transforming each state-ancilla qubit pair or ququart by the unitary $U_{\mathrm{SIC}}^{\dagger}$, where
\begin{align}
    U_{\mathrm{SIC}} = 
        \begin{pmatrix}
            \frac{1}{\sqrt{2}} & \frac{1}{\sqrt{6}} & \frac{1}{\sqrt{6}} & \frac{1}{\sqrt{6}}\\
            0 & \frac{1}{\sqrt{3}} & \frac{e^{i2\pi/3}}{\sqrt{3}} & \frac{e^{i4\pi/3}}{\sqrt{3}}\\
            0 & \frac{1}{\sqrt{3}} & \frac{e^{-i2\pi/3}}{\sqrt{3}} & \frac{e^{-i4\pi/3}}{\sqrt{3}}\\
            \frac{1}{\sqrt{2}} & \frac{-1}{\sqrt{6}} & \frac{-1}{\sqrt{6}} & \frac{-1}{\sqrt{6}}
        \end{pmatrix}, \label{eq:U_SIC}
\end{align}
and then performing a computational basis measurement. The construction of $U_{\mathrm{SIC}}$ follows from Neumark's Theorem, which we prove carefully in Appendix~\ref{sec:Neumark} for the reader's convenience. The intuition is to convert the SIC elements into a $4$-dimensional orthonormal basis by appending to each $\ket{\tilde{\phi}_i}$ a respective $\ket{\tilde{\phi}^\perp_i}$ such that $\braket{\tilde{\phi}_i}{\tilde{\phi}^\perp_i}=0$.

An estimator based on SIC POVMs follows naturally from Theorem~\ref{thm:CEs-via-SICs} and Eq.~\eqref{eq:CE-local-symmetric-subspaces}. Because we saw that MoM gives tighter sample complexity upper bounds, we focus only on MoM estimation for SICs.

\begin{theorem}[CE Estimation via SIC Data and MoM]\label{thm:CEs-via-SICs-MoM-K=2}
    Given a precision $\epsilon > 0$ and confidence level $1 - \delta \in (0, 1)$, perform a total of $M = 2 N_B B$ SIC measurements,
    where $N_B=\lceil 8\log(1/\delta)\rceil$ and $B=\lceil 4 (3^{s}/\epsilon^2) \rceil$.
    Let $\mathbf{Q}_1, \ldots, \mathbf{Q}_M$ denote the outcomes obtained from these measurements.
    Break these $M$ outcomes into $N_B$ batches of size $2B$, and for each $1 \leq b \leq N_B$, compute the empirical mean
    \begin{align}
        \overline{\CC}^{(b)}_{\ket{\psi}}(S) = 1 - 3^{s} \frac{1}{B}\sum_{i=(b-1)B + 1}^{b B}\ind{\bm{Q}_{2i-1}=\bm{Q}_{2i}}.
    \end{align}
    Then, we have the guarantee that
    \begin{align}
        \Pr[|\med[\overline{\CC}^{(1)}_{\ket{\psi}}(S),\ldots,\overline{\CC}^{(N_B)}_{\ket{\psi}}(S)]- \mathcal{C}_{\ket{\psi}}(S)| \geq \epsilon] \leq \delta,
    \end{align}
    giving an upper bound of $O\left(3^{s} \log\left(\frac{1}{\delta}\right) \epsilon^{-2}\right)$ on the sample complexity of estimating $\mathcal{C}_{\ket{\psi}}(S)$ using SIC measurements.
\end{theorem}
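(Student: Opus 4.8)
The plan is to cast the statement as a direct application of the median-of-means bound of Proposition~\ref{prop:med-of-means}, in close analogy with the proof of Theorem~\ref{thm:CE-via-LRMs-MoM}. The essential simplification relative to the LRM case is that the local SIC measurement is a single fixed setting, so there is no average over random unitaries to approximate; the only sampling error arises in estimating the collision probability $p := \sum_{\mathbf{q}} P(\mathbf{q})^2$ that appears in Corollary~\ref{thm:CEs-via-SICs}. First I would fix the elementary unbiased estimator: grouping the $M = 2 N_B B$ outcomes into the pairs $(\bm{Q}_{2i-1}, \bm{Q}_{2i})$, each pair consists of two independent SIC measurements on identical copies of $\rho$, so the collision indicator $X_i := \ind{\bm{Q}_{2i-1} = \bm{Q}_{2i}}$ is Bernoulli with $\E[X_i] = \sum_{\mathbf{q}} P(\mathbf{q})^2 = p$, since by independence $\Pr[\bm{Q}_{2i-1} = \bm{Q}_{2i}] = \sum_{\mathbf{q}} P(\mathbf{q})^2$. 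By Eq.~\eqref{eq:CE-via-SICs}, the random variable $\widehat{\CC}_i := 1 - 3^{s} X_i$ is therefore an unbiased estimator of $\CC_{\ket{\psi}}(S)$, and each batch average $\overline{\CC}^{(b)}_{\ket{\psi}}(S)$ is precisely the empirical mean of $B$ i.i.d.\ copies of $\widehat{\CC}_i$.

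The key step, and the one worth checking carefully, is the variance bound feeding into Proposition~\ref{prop:med-of-means}. Because $X_i$ is Bernoulli with parameter $p$, one has $\var(X_i) = p(1 - p) \leq p$, and non-negativity of the entanglement measure, $\CC_{\ket{\psi}}(S) = 1 - 3^{s} p \geq 0$, forces $p \leq 3^{-s}$. Combining these gives
\[
    \var(\widehat{\CC}_i) = 3^{2s}\, \var(X_i) \leq 9^{s}\cdot 3^{-s} = 3^{s},
\]
the SIC analog of the LRM variance bound in Eq.~\eqref{eqn:CE_LRM_K2_var_bound}. I expect this to be the only genuine obstacle: the point is that the Bernoulli structure together with the positivity constraint yields a clean bound $\sigma^2 = 3^{s}$ directly, without any need to control higher moments of the outcome distribution. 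It is worth noting that, unlike the LRM-MoM estimator, there is no additional $(1/2)^{s}$ saving here; that factor came from the uniform average over bitstrings in $\{0,1\}^{s}$, whereas the SIC outcomes live in $\{1,2,3,4\}^{s}$ and admit no analogous averaging, which is exactly why the exponential base is $3$ rather than $3/2$.

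With the bound $\var(\widehat{\CC}_i) \leq 3^{s}$ in hand, I would invoke Proposition~\ref{prop:med-of-means} with $\sigma^2 = 3^{s}$: taking $B = \lceil 4\cdot 3^{s}/\epsilon^{2}\rceil$ samples per batch and $N_B = \lceil 8\log(1/\delta)\rceil$ batches guarantees that the median $\med[\overline{\CC}^{(1)}_{\ket{\psi}}(S), \ldots, \overline{\CC}^{(N_B)}_{\ket{\psi}}(S)]$ of the batch means lies within $\epsilon$ of $\CC_{\ket{\psi}}(S)$ with probability at least $1 - \delta$. Finally, since each of the $N_B B$ collision indicators consumes two SIC measurements, the total measurement budget is $M = 2 N_B B = O\!\left(3^{s}\log(1/\delta)\,\epsilon^{-2}\right)$, giving the stated sample complexity.
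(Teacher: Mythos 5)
Your proposal is correct and follows essentially the same route as the paper's own proof: an unbiased collision-indicator estimator $\ind{\bm{Q}_{2i-1}=\bm{Q}_{2i}}$ for $P_2$, the Bernoulli variance bound $P_2(1-P_2)\leq P_2$ combined with $P_2 \leq 3^{-s}$ from non-negativity of the CE, and a direct application of Proposition~\ref{prop:med-of-means}. The only cosmetic difference is that the paper obtains the variance by specializing its general-$K$ lemma (Lemma~\ref{lem:SIC_sumsqprobest_K_var}) to $K=2$, whereas you derive it directly from the Bernoulli structure, which is equivalent.
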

The proof of this theorem can be found in Appendix~\ref{app:proof-CE-via-LRMs-MoM}. Note that we have restricted our attention to the $K=2$ case in the main text. Evidently, this sample complexity is substantially worse than the LRMs. We can improve upon this by optimizing the number of samples used to construct a single estimate of the CE using SIC measurements  (i.e. by optimizing $K$). As we show in Appendix~\ref{app:proof-CE-via-SICs-K_opt}, the optimal value of $K$ is given as
    \begin{align}\label{eq:K-opt}
    \begin{split}
        K_{\mathrm{opt}} &= \Biggl \lceil \frac{1}{2} \left(\frac{16}{\epsilon^2} \left(\frac{3}{2}\right)^{s} + 1\right) \\
        &+\frac{1}{2} \sqrt{\left(\frac{16}{\epsilon^2} \left(\frac{3}{2}\right)^{s} - 1\right)^2 + \frac{32}{\epsilon^2} 3^{s}} \Biggr \rceil,
    \end{split}
    \end{align}
which corresponds to the value of $K$ that minimizes the bound on the variance of the estimator for CE determined by SIC measurements.
As shown in Appendix~\ref{app:proof-CE-via-SICs-K_opt}, setting $K=K_{\mathrm{opt}}$ and appropriately modifying Theorem~\ref{thm:CEs-via-SICs-MoM-K=2}, it follows that $O\left( \left[\left(\frac{3}{2}\right)^{s} \epsilon^{-2} + \sqrt{3^{s}} \epsilon^{-1}\right] \log\left(\frac{1}{\delta}\right) \right)$ total measurements are sufficient to guarantee that
    \begin{align}
\Pr[|\med[\overline{\CC}^{(1)}_{\ket{\psi}}(s),\ldots,\overline{\CC}^{(N_B)}_{\ket{\psi}}(s)]- \mathcal{C}_{\ket{\psi}}(s)| \geq \epsilon] \leq \delta.
    \end{align}
With this optimization, we have shown that using a single experimental measurement setting, one can obtain an upper bound on the sample complexity that is similar to the bound obtained for LRMs, which requires an exponential number of different measurement settings. However, as noted in Table~\ref{table:complexities}, the classical post-processing in this optimized case is more costly. Moreover, the local measurement itself is more difficult to implement than in the case of the simple projective measurements used in LRM protocols. 

\begin{figure}[ht!]
    \centering
    \includegraphics[width=0.45\textwidth]{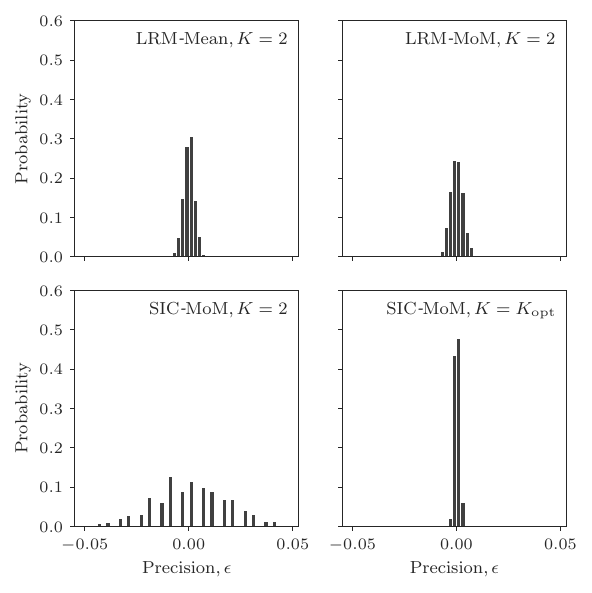}
    \caption{\textbf{Comparison of estimators at a fixed total measurement budget and confidence level}. Histograms generated from numerical simulation of all four estimators (see Table~\ref{table:complexities}) of the CE of 5-qubit GHZ state. Random local Clifford measurements were employed and the total measurement budget was computed from the upper bound on $\mathrm{SIC}\textit{-}\mathrm{MoM},K_{\mathrm{opt}}$ in the case of $\epsilon=\delta=0.05$.}
    \label{fig:4histComp-GHZ5}
\end{figure}

Nonetheless, we hope that our results will be useful to experimentalists that would like to quantify multipartite entanglement without the need to prepare and coherently manipulate multiple identical copies of a quantum state. While this capability is becoming possible on neutral atom platforms \cite{bluvstein2022quantum,bluvstein2023Logical}, it remains very challenging on other architectures.

Our motivation in this work was to improve upon protocols in the literature and find the most resource efficient method for multipartite entanglement quantification using only local measurements. While we have accomplished this, we also aimed to provide tools that experimentalists can use in current and near-term experiments where states are not perfectly pure. While there are a number of ways one could quantify mixed state entanglement, we sketch one technique to which the methods above could be directly applied.

\subsection{Mixed States}
Our motivation in this work was to optimize and compare various methods of quantifying pure state multipartite entanglement using only local POVMs. As such, we focused on pure states with ideal unitary evolution and perfect measurements. While this question is of theoretical interest, to be useful in practice, one must be able to handle mixed states. While we save the full treatment of mixed states for a future work (more on this below), we mention here one potential way forward that would utilize our results here. 

A standard method of extending pure state entanglement measures to mixed state ones is via a convex roof extension \cite{bennett1996Mixedstate,uhlmann2010Roofs}. This was done for the CEs in Ref.~\cite{beckey2023Multipartite}, but we sketch the argument here to illustrate the main points. The convex roof extension can be expressed as
\begin{align}\label{eq:mixed-CE}
    \mathcal{C}_{\rho}(S) &= \inf \sum_i p_i \mathcal{C}_{\ket{\psi_i}}(S),
\end{align}
where the infimum is over the set of decompositions of the form $\rho=\sum_i p_i \ket{\psi_i}\bra{\psi_i}$, with $p_i \geq 0$ for all $i$ and $\sum_i p_i =1$. Because this optimization is generally difficult, one often considers a lower bound on Eq.~\eqref{eq:mixed-CE}. We will show how one could construct such lower bounds for the CEs and compute them from local measurement data alone, allowing one to bound the mixed state entanglement within the above framework developed for estimating pure state entanglement. 

First we note that the bipartite concurrence \cite{wootters2001entanglement, rungta2001Universal} of a pure quantum state $\ket{\psi}_{AB}$ can be expressed as 
\begin{align}
    c_2(\ket{\psi}_{AB}) &= \sqrt{2(1-\tr{\rho_A^2})}.
\end{align}
We can use this to express the CE in terms of the bipartite concurrences by defining $c_{\alpha}:= \sqrt{2(1-\tr{\rho_{\alpha^2}})}$, leading to 
\begin{align}
    \mathcal{C}_{\ket{\psi}} (S) = \frac{1}{2^{s+1}} \sum_{\alpha} c_{\alpha}^2(\ket{\psi}).
\end{align}
Then, one could use the observable lower bound on the bipartite concurrence introduced in Ref.~\cite{mintert2007observable} to construct a lower bound on the CEs. For example, when the $S=[n]$ and the CE is simply related to the multipartite concurrence of Ref.~\cite{carvalho2004decoherence}, one obtains a lower bound of the form
\begin{align}
    \CC_{\rho}^{\ell}([n]) &= \frac{1}{2^n} + (1-\frac{1}{2^n})\tr{\rho^2} - \frac{1}{2^n}\sum_{\alpha \in \PC ([n])} \tr{\rho_{\alpha}^2}.
\end{align}
Throughout this work, we have shown how to estimate the last term using various local measurement strategies. The middle term depends only on the quantum state's purity, $\tr{\rho^2}$. There exists a number of theoretical \cite{vanenk2012measuring,elben2019statistical} and experimental \cite{daley2012measuring,islam2015measuring,kaufman2016quantum,vermersch2018unitary,brydges2019probing} works showing how to estimate the purity of quantum states using local random measurements, and local SICs were implemented experimentally in Ref.~\cite{stricker2022Experimental} to estimate purity. In fact, the purity of a quantum state is a quantity of great practical interest in its own right, and we suspect our results will be of interest to those researchers interested in proving rigorous performance guarantees on its estimation.

Coupled with these results from the literature, one could use the framework and techniques described above to probe mixed state entanglement in this manner. We further note that, for high-purity states that are becoming increasingly common in today's state of the art experiments, this bound is very close to the pure state theoretical value. This can be seen by noting that $C_{\ket{\psi}}([n]) - \CC_{\rho}^{\ell}([n]) = (1-2^{-n})(1-\tr{\rho^2}),$ which is very close to zero for nearly pure states. 

Such a procedure described above would generalize the method used in Ref.~\cite{ohnemus2023Quantifying} for the multipartite concurrence~\cite{mintert2005concurrence,mintert2007observable}, but is just one possible extension of our methods to mixed state entanglement. There have been a number of interesting works in the recent years that use local random measurement strategies to probe mixed state entanglement \cite{elben2020MixedState,neven2021Symmetryresolved,vermersch2023Manybody}, and exploring the connections between our proposed method and theirs is an interesting direction we save for future work.

\section{Future Directions and Conclusions}
Before concluding, we would like to mention some directions for future investigations and place our work in the broader context of the literature. As mentioned above, a timely follow-up to our work would be the careful analysis applying our methods to mixed states and, ideally, an implementation on real hardware.

In addition to matters of direct practical interest, our work also connects to interesting open theoretical questions regarding the ultimate limits on the learnability of quantum entanglement with local measurements. While we provide several upper bounds on the sample complexity of the estimation of multipartite entanglement with local measurements, the best of which scales as $(\frac{3}{2})^{s}$, it remains an open question as to what the optimal scaling is. Finding matching lower and upper bounds on the sample complexity of this task would be an exciting result both for experimentalists wishing to probe entanglement in the lab and to the quantum learning theory community that aims to quantify the ultimate limits of the learnability of quantum properties. In particular, such bounds would allow one to establish ultimate separations between local, single-copy, and multi-copy measurement strategies. While such separations have been established for QST (see Table 1 in Ref.~\cite{lowe2022Lower}) and other learning tasks, the authors are unaware of any such separation in the case of multipartite entanglement estimation.

However, partial results in that direction do exist. Given the ability to create, store, and coherently manipulate two copies of state at once (as in Refs.~\cite{bluvstein2022quantum,bluvstein2023Logical}), one could use Bell basis measurements on each qubit of the two copies of $\rho$ to estimate the CEs~\cite{beckey2023Multipartite}. Given access to these two-copy measurements, Ref.~\cite{beckey2023Multipartite} showed that at most $O(\log{(\frac{1}{\delta})} \epsilon^{-2})$ measurements are needed to estimate the CEs to $\epsilon$ precision with probability $1-\delta$. Thus, for constant $\epsilon,\delta$, two-copy measurements allow for multipartite entanglement quantification using at most $O(1)$ measurements, while our best local strategy scales as $O((\frac{3}{2})^s)$. Proving an exponential lower bound on multipartite entanglement estimation with local measurements would imply an exponential separation between local and two-copy measurements, which would be very interesting to theorists and experimentalists alike.

To conclude, in this work we have generalized Ref.~\cite{ohnemus2023Quantifying}, which studies the estimation of multipartite concurrence using LRMs, to the CEs and then significantly simplified the error analysis and provided analytical upper bounds on the sample complexity of this task, as summarized in Table~\ref{table:complexities}. While each measurement in an LRM protocol is easy to implement, the number of measurement \textit{settings} required scales exponentially with the number of qubits. To address this experimental shortcoming of LRMs, we provided a \textit{de-randomization} of the entanglement estimation procedure, allowing experimentalists to estimate many multipartite entanglement measures of interest using a \textit{single experimental setting}.

Finally, the purity of a quantum state is a quantity of great practical interest, in its own right. Many works exploring the estimation of functions of the form $\tr{\rho^k}$ for integer $k$ fall short of the rigorous performance guarantees we provide here. We suspect our methods of statistical estimation could be adapted to those settings as well. In general, we hope our work will enable multipartite entanglement quantification in the increasingly large quantum systems being built, and coherently controlled, in experimental laboratories today.

\begin{acknowledgments}
The authors thank Chris Fuchs for several enlightening discussions about SIC measurements during his visit to JILA. This work was supported by NSF award 2137984 and NSF PHY 1915407.
\end{acknowledgments}

\bibliography{main.bib}

\setcounter{section}{0}
\setcounter{proposition}{3}
\setcounter{theorem}{4}
\setcounter{lemma}{1}
\setcounter{corollary}{1}
\setcounter{figure}{0}
\renewcommand{\figurename}{Sup. Fig.}

\appendix
\section*{Appendix}
\maketitle

\section{Preliminaries}
\subsection{Neumark's (Naimark's) Theorem}\label{sec:Neumark}
Neumark's theorem is a procedure for realizing POVMs as a projective measurement on a larger Hilbert space. Because it is the key theoretical component that allows SICs to be implemented in practice, we provide a proof here. Our proof is based directly upon the one given in Preskill's 1997 lecture notes \cite{preskillNotes}. Note that we focus on the case of POVMs with four elements due to our interest in single qubit SICs; however, the result can be generalized for POVMs with $N$ elements.

\begin{theorem}[Neumark's (Naimark's) Theorem for single-qubit SICs]\label{thm:Neumark's}
    Suppose our two-dimensional Hilbert space of interest $\mathcal{H}_A$ is actually a subspace of a larger Hilbert space with a direct sum structure $\HC = \HC_A \oplus \HC_A^{\perp}$, where $\HC_A^\perp$ is another two-dimensional Hilbert space. Then, a single-qubit SIC-POVM has elements $\{ F_i\}_{i=0}^3=\{ \frac{1}{2} \ketbra{\phi_i}{\phi_i}\}_{i=0}^3 = \{ \ketbra{\Tilde{\phi_i}}{\Tilde{\phi_i}}\}_{i=0}^3\subset \HC_A$ that only have support on $\HC_A$, i.e, $F_i \ket{\phi^\perp} = 0 = \bra{\phi^\perp}F_i$,
    for any $\ket{\phi^\perp}\in \HC_A^\perp$ and for any~$i$.
    We can then realize the SIC-POVM as a projective measurement $\{\ketbra{u_i}{u_i} \}$ on $\HC$, where $\{\ket{u_0}, \dotsc, \ket{u_3}\}$ is an orthonormal basis, with $\ket{u_i}$ defined as
    \begin{align}
        \ket{u_i} = \ket{\Tilde{\phi_i}} \oplus \ket{\Tilde{\phi}_i^\perp},
    \end{align}
    where $\ket{\Tilde{\phi}_i^\perp}$ is a vector of magnitude $1/2$ that is orthogonal to $\ket{\Tilde{\phi}_i}$ for each $i$ given in Eq.~\eqref{eqn:psi_perp_SIC}.
\end{theorem}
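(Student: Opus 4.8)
The plan is to establish two things: that the projective measurement $\{\ketbra{u_i}{u_i}\}$ on $\HC$ reproduces the SIC-POVM outcome statistics for every input $\ket{\psi}\in\HC_A$, and that $\{\ket{u_0},\dotsc,\ket{u_3}\}$ is genuinely an orthonormal basis of the four-dimensional space $\HC$, so that the $\ketbra{u_i}{u_i}$ form a legitimate projective measurement. The first fact is immediate from the direct-sum structure: since $\ket{\psi}\in\HC_A$ while $\ket{\tilde\phi_i^\perp}\in\HC_A^\perp$, the cross term vanishes and $\langle u_i|\psi\rangle = \langle\tilde\phi_i|\psi\rangle + \langle\tilde\phi_i^\perp|\psi\rangle = \langle\tilde\phi_i|\psi\rangle$, so $|\langle u_i|\psi\rangle|^2 = \langle\psi|F_i|\psi\rangle$, exactly the SIC outcome probability. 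Thus the real content of the theorem lies entirely in the orthonormality claim.

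To prove orthonormality, I would first encode the POVM completeness relation as a statement about the Gram matrix $G$ with entries $G_{ij}=\langle\tilde\phi_i|\tilde\phi_j\rangle$. Defining the map $V\colon\HC_A\to\mathbb{C}^4$ by $V\ket{\psi}=\sum_i\langle\tilde\phi_i|\psi\rangle\ket{i}$, the completeness $\sum_i\ketbra{\tilde\phi_i}{\tilde\phi_i}=\mathbb{I}_A$ reads $V^\dagger V=\mathbb{I}_A$, so $V$ is an isometry with two-dimensional range. Consequently $G=VV^\dagger$ satisfies $G^2=V(V^\dagger V)V^\dagger=G$ and $G^\dagger=G$, making $G$ an orthogonal projection of rank two on $\mathbb{C}^4$. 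This structural fact is the key input.

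The main step is then to realize the complementary vectors $\ket{\tilde\phi_i^\perp}$. Because $G$ is a rank-two projection, its complement $\mathbb{I}_4-G$ is again a rank-two positive semidefinite matrix, hence a valid Gram matrix for four vectors in a two-dimensional space. I would take any factorization $\mathbb{I}_4-G=W^\dagger W$ with $W$ a $2\times 4$ matrix and define the $\ket{\tilde\phi_i^\perp}\in\HC_A^\perp$ to be its columns, so that $\langle\tilde\phi_i^\perp|\tilde\phi_j^\perp\rangle=(\mathbb{I}_4-G)_{ij}$. Orthonormality of the $\ket{u_i}$ then follows in one line from the direct-sum inner product, $\langle u_i|u_j\rangle=G_{ij}+(\mathbb{I}_4-G)_{ij}=\delta_{ij}$, and four orthonormal vectors in a four-dimensional space automatically form a basis. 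In particular, the diagonal entries give $\|\tilde\phi_i^\perp\|^2 = 1-\langle\tilde\phi_i|\tilde\phi_i\rangle = 1/2$, which pins down the norm of each orthogonal completion.

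I expect the one genuine obstacle to be the existence and mutual consistency of the $\ket{\tilde\phi_i^\perp}$: a priori one is demanding that four prescribed inner-product relations be simultaneously satisfiable by vectors confined to a mere two-dimensional space, and this is precisely where the rank-two projection property of $G$ (and hence of $\mathbb{I}_4-G$) does all the work. The explicit vectors of Eq.~\eqref{eqn:psi_perp_SIC} and the unitary $U_{\mathrm{SIC}}$ of Eq.~\eqref{eq:U_SIC} then amount to one concrete choice of this factorization, which one could alternatively confirm by direct computation of the sixteen inner products.
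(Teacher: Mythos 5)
Your proof is correct, and it takes a genuinely different route from the paper's. The paper argues concretely: it notes that reproducing the SIC statistics forces $\ket{u_i} = \ket{\Tilde{\phi}_i} \oplus \ket{w_i}$, derives from orthonormality the Gram constraints on the completion vectors ($\ip{w_j}{w_i} = -\ip{\Tilde{\phi}_j}{\Tilde{\phi}_i}$ off the diagonal, $\|w_i\|^2 = 1/2$ on it), and then settles existence by exhibiting the explicit vectors of Eq.~\eqref{eqn:psi_perp_SIC} and asserting that the constraints hold ``through explicit calculation.'' You instead settle existence structurally: POVM completeness $\sum_i \ketbra{\Tilde{\phi}_i}{\Tilde{\phi}_i} = \mathbb{I}_A$ says that $V = \sum_i \dyad{i}{\Tilde{\phi}_i}$ is an isometry, so the Gram matrix $G = VV^\dagger$ is a rank-two orthogonal projection, whence $\mathbb{I}_4 - G$ is again rank-two and positive semidefinite and is therefore itself the Gram matrix of four vectors confined to the two-dimensional space $\HC_A^\perp$; any factorization $\mathbb{I}_4 - G = W^\dagger W$ supplies valid completions, and $\ip{u_i}{u_j} = G_{ij} + (\mathbb{I}_4 - G)_{ij} = \delta_{ij}$ finishes the argument. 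Your route buys an explanation the paper lacks --- it shows \emph{why} a two-dimensional completion can meet all the pairwise inner-product constraints (the projection property of $G$, i.e.\ exactly the completeness relation), and it generalizes verbatim to any rank-one POVM in any dimension, which is the content of Naimark's theorem proper. What the paper's more pedestrian route buys is certification of the \emph{specific} vectors in Eq.~\eqref{eqn:psi_perp_SIC}, which are what define the implementation unitary $U_{\mathrm{SIC}}$ of Eq.~\eqref{eq:U_SIC}; since the theorem statement names those particular vectors, a fully self-contained proof of the statement as written still needs the finite check you defer to (``the sixteen inner products'') --- but the paper defers to the very same computation, so the two proofs are complementary rather than one subsuming the other: yours supplies the existence argument, the paper's the concrete target of verification, and both leave the same routine calculation implicit.
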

\begin{proof}
    We have the following requirements to realize the single-qubit SIC-POVM as a projective POVM on the larger space $\mathcal{H}$:
    \begin{enumerate}
        \item $\{\ket{u_0}, \dotsc, \ket{u_3}\}$ forms an orthonormal basis for $\mathcal{H}$, and
        \item For any state $\rho$ on the space $\mathcal{H}_A$, we have
              \begin{align}
                  \tr{(\rho \oplus 0_{A^\perp})\ketbra{u_i}{u_i}} = \tr{\rho \ketbra{\Tilde{\phi_i}}{\Tilde{\phi_i}}}
              \end{align}
              for all $i \in \{0, \dotsc, 3\}$, where $0_{A^\perp}$ denotes the zero matrix on $\mathcal{H}_A^\perp$.
    \end{enumerate}
    The second requirement above ensures that the POVM on the larger space reproduces the statistics of the single-qubit SIC POVM.
    This condition implies that each $\ket{u_i}$ is of the form $\ket{u_i} = \ket{\Tilde{\psi}_i} \oplus \ket{w_i}$ for some vector $\ket{w_i} \in \mathcal{H}_A^\perp$.
    Then, orthonormality of the $\{\ket{u_i}\}$ basis gives the constraint that $\ip{w_j}{w_i} = -\ip{\Tilde{\phi}_j}{\Tilde{\phi}_i}$ for all $i, j$.
    It can be verified through explicit calculation that choosing $\ket{w_i} = \ket{\Tilde{\phi}_i^\perp}$ to be a vector perpendicular to $\ket{\Tilde{\phi}_i}$ for each $i$ given in the equation below satisfies the above requirements.
    \begin{align}
    \begin{split}
        \ket{\Tilde{\phi}_0^\perp}&=\frac{1}{\sqrt{2}}\ket{1},\\
        \ket{\Tilde{\phi}_1^\perp}&=\frac{1}{\sqrt{3}}\ket{0}-\frac{1}{\sqrt{6}}\ket{1},\\
        \ket{\Tilde{\phi}_2^\perp}&=\frac{e^{-i 2\pi/3}}{\sqrt{3}}\ket{0}-\frac{1}{\sqrt{6}}\ket{1},\\
        \ket{\Tilde{\phi}_3^\perp}&=\frac{e^{-i 4\pi/3}}{\sqrt{3}}\ket{0}-\frac{1}{\sqrt{6}}\ket{1}.
    \end{split}
    \label{eqn:psi_perp_SIC}
    \end{align}
\end{proof}
Neumark's Theorem involves a direct sum structure, but it can be exchanged for a tensor product structure with some caveats \cite{chen2007ancilla}.
For the case of single-qubit SIC-POVM, however, this is not a problem as 
both $\mathcal{H}_A$ and $\mathcal{H}_A^\perp$ are two-dimensional.
Specifically, we can obtain a tensor product structure by considering an ancillary qubit $B$, and writing
\begin{align}
   \ket{u_i} = \ket{\Tilde{\phi_i}}_A\ket{0}_B + \ket{\Tilde{\phi_i^\perp}}_A\ket{1}_B.
\end{align}
This once again gives the same measurement statistics as the POVM,
\begin{align}
    \tr{\rho \ketbra{\Tilde{\phi_i}}{\Tilde{\phi_i}}} = \tr{ (\rho \otimes \ketbra{0}{0}) \ketbra{u_i}{u_i}}.
\end{align}
To physically implement a SIC-POVM we can construct a unitary to act on our state such that a computational basis measurement afterward will give us the outcomes for a SIC. Let $U_{\mathrm{SIC}} = \begin{bmatrix}
        \ket{u_1} & \ket{u_2} & \ket{u_3} & \ket{u_4}
    \end{bmatrix},$
then
\begin{align}
    \tr{\rho \ketbra{\Tilde{\phi_i}}{\Tilde{\phi_i}}} = \tr{ U_{\mathrm{SIC}}^{\dagger}(\rho \otimes \ketbra{0}{0})U_{\mathrm{SIC}} \ketbra{q}{q}}
\end{align}
where $q\in \{0,1,2,3\}$. An explicit matrix for $U_\mathrm{SIC}$ is given in Eq.~\eqref{eq:U_SIC}.

\section{Proofs of Main Results} \label{app:main-results}
\subsection{LRMs} \label{app:main-results-LRMs}
\begin{proof}[Proof of Proposition.~\ref{prop:CE-via-LRM-exact}]
    Let $S\subseteq [n]$ denote the system that we are interested in. We define $U$ as follows:
    \begin{align}
        U=\prod_{i\in S}U_{i},
    \end{align}
    where $U_i \in U(2)$. We can then compute the Haar average quantity $ \mathbb{E}[P_{U}(\mathbf{z})^2]$ as
    \begin{align}
    \begin{split}
        \mathbb{E}[P_{U}(\mathbf{z})^2] &= \mathbb{E}_{U}[\tr{\rho U \ketbra{\mathbf{z}}{\mathbf{z}} U^{\dagger}]^2},\\
        &= \mathbb{E}_{U}[\tr{\rho U \ketbra{\mathbf{z}}{\mathbf{z}} U^{\dagger}}\tr{\rho U \ketbra{\mathbf{z}}{\mathbf{z}} U^{\dagger}}],\\
        &=\mathbb{E}_{U}[\tr{\rho U \ketbra{\mathbf{z}}{\mathbf{z}} U^{\dagger} \otimes \rho U \ketbra{\mathbf{z}}{\mathbf{z}} U^{\dagger}}],\\
        &=\mathbb{E}_{U}[\tr{\rho^{\otimes 2} (U \ketbra{\mathbf{z}}{\mathbf{z}} U^{\dagger})^{\otimes 2}}],\\
    &=\tr{\rho^{\otimes2}\mathbb{E}_U[(U\ketbra{\mathbf{z}}{\mathbf{z}}U^{\dagger})^{\otimes 2}]},\\
        &=\tr{\rho^{\otimes 2}\prod_{i\in s}\mathbb{E}_{U_i}[U_i^{\otimes 2}\ketbra{z_i}{z_i}^{\otimes 2}U^{\dagger^{\otimes 2}}_i]},\\
        \mathbb{E}[P_{U}(\mathbf{z})^2] &=\left(\frac{1}{3}\right)^{\lvert s \rvert}\tr{\rho^{\otimes 2}\prod_{i\in s}\Pi^i_{+}},
    \end{split}
    \end{align}
where the last line follows from:
\begin{align}
\begin{split}
    \mathbb{E}_{U}[U^{\otimes 2}\ketbra{\mathbf{z}}{\mathbf{z}}U^{\dagger^{\otimes 2}}] &= \frac{1}{\tr{\Pi_+}}\Pi_{+}\tr{\Pi_{+}\ketbra{\mathbf{z}}{\mathbf{z}}^{\otimes 2}}\\
    &+\frac{1}{\tr{\Pi_{-}}}\Pi_{-}\tr{\Pi_{-}\ketbra{\mathbf{z}}{\mathbf{z}}^{\otimes 2}},\\
    &=\frac{2}{d(d+1)}\Pi_{+}\tr{\frac{\mathbb{I}+\mathbb{F}}{2}\ketbra{\mathbf{z}}{\mathbf{z}}^{\otimes 2}},\\
    &=\frac{2}{d(d+1)}\Pi_{+}.
\end{split}
\end{align}
Where we notice that since $\ketbra{\mathbf{z}}{\mathbf{z}}$ is a product state, under the SWAP operation ($\mathbb{F}$) it remains the same and is thus annihilated by the anti-symmetric projector. Using the relation that $\Pi_{+}=\frac{\mathbb{I}+\mathbb{F}}{2}$, $d=2$, and Lemma \ref{lemma:swap-trick}, i.e. the SWAP trick, we can rewrite to the following:
\begin{align}
    \mathbb{E}[P_{U}(\mathbf{z})^2] &= \left(\frac{1}{3}\right)^{\lvert s \rvert}\tr{\rho^{\otimes 2}\prod_{i\in s}\Pi_+^i},\\
    &= \left(\frac{1}{6}\right)^{\lvert s \rvert}\tr{\prod_{i\in S}(\mathbb{F}_{i}+\mathbb{I}_i)\rho^{\otimes 2}},\\
    &=\left(\frac{1}{6}\right)^{\lvert s \rvert}\sum_{\alpha \in \mathcal{P}(s)}\tr{\rho_{\alpha}^2},\\
    \implies &\mathcal{C}_{\ket{\psi}}(S)=1-3^{\lvert s \rvert}\mathbb{E}[P_U(\mathbf{z})^2]
\end{align}
\end{proof}

\begin{proof}[Proof of Theorem~\ref{thm:Ohnemus-LRM-Hoeffding}]\label{app:proof-Ohnemus-LRM-Hoeffding}
    Let $\bm{Z}_{l,k}$ denote the $k$th outcome observed after rotating a given subset $S$ of qubits by $U_l$ and measuring in the computational basis of those qubits.
    For a fixed $\bm{z}\in\{0,1\}^{s}$ bitstring, let $\widehat{P}_l(\bm{z})=\frac{1}{K}\sum_{k=1}^K \ind{\bm{Z}_{l,k}=\bm{z}}$ denote the fraction of $K$ outcomes equal to $\bm{z}$.
    Observe that $\widehat{P}_l(\bm{z})$ is an unbiased estimator of $\mathbb{E}_U[P_U(\bm{z})]$, which is the probability of observing the bitstring $\bm{z}$, averaged over unitaries.
    Note that here we take the expectation value over the outcome probabilities as well as the unitaries.
    Then, following Ohnemus \textit{et al.}~\cite{ohnemus2023Quantifying},
    \begin{align} \label{eq:p-squared-unbiased}
         \widehat{P}^{(2)}_l({\bm{z}}) &= \widehat{P}_l({\bm{z}})\frac{(K \widehat{P}_l({\bm{z}}) - 1)}{K - 1}.
    \end{align}
    is an unbiased estimator of the squared probability averaged over unitaries, $\mathbb{E}_U[P_U(\bm{z})^2]$.
    Consequently, $\hat{S}_l^{(K)} = \sum_{\bm{z} \in \{0, 1\}^s} \widehat{P}^{(2)}_l(\bm{z})$ is an unbiased estimator of the sum of squared probabilities averaged over unitaries.
    To see that $\hat{S}_l^{(K)}$ coincides with Eq.~\eqref{eqn:sumsqprobest}, we substitute the expression for $\widehat{P}_l(\bm{z})$ in terms of indicator function to obtain 
    \begin{align}
        \hat{S}_l^{(K)} = \frac{1}{K (K - 1)} \sum_{\substack{k, k' = 1\\ k \neq k'}}^{K} \sum_{\bm{z}} \ind{\bm{Z}_{l,k} = \bm{z}} \ind{\bm{Z}_{l,k'} = \bm{z}}.
    \end{align}
    Observe that $\sum_{\bm{z}} \ind{\bm{Z}_{l,k} = \bm{z}} \ind{\bm{Z}_{l,k'} = \bm{z}}$ is $1$ when $\bm{Z}_{l,k} = \bm{Z}_{l,k'}$ and $0$ otherwise.
    Thus, we can write $\sum_{\bm{z}} \ind{\bm{Z}_{l,k} = \bm{z}} \ind{\bm{Z}_{l,k'} = \bm{z}} = \ind{\bm{Z}_{l,k} = \bm{Z}_{l,k'}}$, and subsequently, we obtain
    \begin{align}
        \hat{S}_l^{(K)} = \frac{1}{K (K - 1)} \sum_{\substack{k, k' = 1\\ k \neq k'}}^{K} \ind{\bm{Z}_{l,k} = \bm{Z}_{l,k'}}.
    \end{align}
    As a result, $\hat{\mathcal{C}}_l^{(K)}(S) = 1 - (3/2)^s \hat{S}_l^{(K)}$ is an unbiased estimator of CE for each $l$, and $\hat{\mathcal{C}}_{\ket{\psi}}(S)$ defined in Eq.~\eqref{eq:Ohnemus-LRM-Hoeffding} is just the empirical average of $\hat{\mathcal{C}}_l^{(K)}(S)$ over $L$ randomly sampled unitaries.
    Then, since $\hat{S}_l^{(K)}$ is bounded between $0$ and $1$, using Hoeffding's inequality (Proposition~\ref{fact:hoeffding}),
    one obtains a sample complexity of $O\left(\left(\frac{9}{4}\right)^{s}\log(\frac{1}{\delta})\epsilon^{-2}\right)$ for estimating the CE.
\end{proof}
\begin{proof}[Proof of Theorem~\ref{thm:CE-via-LRMs-MoM}]\label{app:proof-CE-via-LRMs-MoM}
    From the proof of Theorem~\ref{thm:Ohnemus-LRM-Hoeffding}, we know that the the quantity $\hat{\mathcal{C}}_l^{(2)} = 1 - (3/2)^s \hat{S}_l^{(2)}$ is an unbiased estimator of $\mathcal{C}_{\ket{\psi}}(S)$.
    From Eq.~\eqref{eqn:CE_LRM_K2_var_bound}, we know that variance of $\hat{\mathcal{C}}_l^{(2)}$ is bounded above by $(3/2)^s$.
    Then applying Propostion~\ref{prop:med-of-means} yields the advertised sample complexity of $O\left(\left(\frac{3}{2}\right)^{s} \log\left(\frac{1}{\delta}\right) \epsilon^{-2}\right)$.
    The only remaining component of the proof is calculating the variance of $\hat{\mathcal{C}}_l^{(K)}$, which is done in Lemma~\ref{lem:sumsqprobest_var}.
\end{proof}

\begin{lemma}
\label{lem:sumsqprobest_var}
The variance of the estimator $\hat{S}_l^{(K)}$ defined in Eq.~\eqref{eqn:sumsqprobest} is given by
\begin{align}
    \var[\hat{S}_l^{(K)}] &= \frac{2 P_2 (1 - P_2) + 4 (K - 2) (P_3 - P_2^2)}{K (K - 1)} \nonumber \\
                          &\qquad  + \frac{(K - 2) (K - 3) (P_{2, 2} - P_2^2)}{K(K-1)}. \label{eqn:sumsqprobest_var}
\end{align}
\end{lemma}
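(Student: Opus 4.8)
The plan is to treat $\hat{S}_l^{(K)}$ as a symmetric (U-statistic-like) sum of pairwise coincidence indicators and obtain the variance from its second moment. Fix $l$ and abbreviate $U = U_l$; the crucial structural fact is that, conditioned on $U$, the repetitions $\bm{Z}_{l,1},\dotsc,\bm{Z}_{l,K}$ are i.i.d.\ draws from $P_U$, so conditional expectations of products of distinct-index indicators factorize. Writing $\xi_{kk'} := \ind{\bm{Z}_{l,k}=\bm{Z}_{l,k'}}$ (which is symmetric in $k,k'$ and idempotent), we have $\hat{S}_l^{(K)} = \tfrac{1}{K(K-1)}\sum_{k\neq k'}\xi_{kk'}$. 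Since $\E[\xi_{kk'}\mid U] = \sum_{\bm{z}}P_U(\bm{z})^2$ for every ordered pair $k\neq k'$, averaging over the Haar measure recovers $\E[\hat{S}_l^{(K)}] = P_2$, which reappears when we subtract $P_2^2$ at the end.

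For the second moment I would expand $\bigl(\hat{S}_l^{(K)}\bigr)^2 = \tfrac{1}{K^2(K-1)^2}\sum_{k\neq k'}\sum_{m\neq m'}\xi_{kk'}\xi_{mm'}$ and partition the double sum by the overlap $r := |\{k,k'\}\cap\{m,m'\}| \in \{2,1,0\}$. When $r=2$ the two index sets coincide and idempotence gives $\xi_{kk'}\xi_{mm'} = \xi_{kk'}$, so each of the $2K(K-1)$ such ordered combinations contributes $\E[\xi_{kk'}] = P_2$. When $r=1$ the two indicators force a triple coincidence of outcomes, so $\E[\xi_{kk'}\xi_{mm'}\mid U] = \sum_{\bm{z}}P_U(\bm{z})^3$; there are $4K(K-1)(K-2)$ such combinations, each contributing $P_3$ after averaging. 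When $r=0$ the two indicators involve disjoint repetitions, hence are conditionally independent given $U$, giving $\E[\xi_{kk'}\xi_{mm'}\mid U] = \bigl(\sum_{\bm{z}}P_U(\bm{z})^2\bigr)^2$; there are $K(K-1)(K-2)(K-3)$ such combinations, each contributing $P_{2,2} = \E_U[(\sum_{\bm{z}}P_U(\bm{z})^2)^2]$ after averaging. As a consistency check, the three multiplicities $2K(K-1)$, $4K(K-1)(K-2)$, and $K(K-1)(K-2)(K-3)$ sum to $[K(K-1)]^2$, the total number of ordered terms; equivalently $2 + 4(K-2) + (K-2)(K-3) = K(K-1)$.

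Collecting terms and dividing by $K^2(K-1)^2$ gives
\begin{align}
\E\bigl[(\hat{S}_l^{(K)})^2\bigr] = \frac{2P_2 + 4(K-2)P_3 + (K-2)(K-3)P_{2,2}}{K(K-1)}. \nonumber
\end{align}
Subtracting $P_2^2$ and using the identity $2 + 4(K-2) + (K-2)(K-3) = K(K-1)$ lets me redistribute $-P_2^2$ across the three numerator groups, turning them into $P_2(1-P_2)$, $P_3 - P_2^2$, and $P_{2,2} - P_2^2$, which is exactly Eq.~\eqref{eqn:sumsqprobest_var}. The main obstacle is purely combinatorial bookkeeping: correctly enumerating the ordered index-pair combinations in each overlap class and, more subtly, correctly reading off the conditional expectation in each case---in particular recognizing that the single-overlap case collapses to the triple-coincidence moment $P_3$ (not $P_2^2$), and that the disjoint case yields $P_{2,2} = \E_U[(\sum_{\bm{z}}P_U(\bm{z})^2)^2]$ rather than $P_2^2$, because the two coincidence events share the common random unitary $U$ and are therefore correlated once the Haar average is taken.
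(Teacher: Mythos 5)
Your proof is correct and follows essentially the same route as the paper's: the same three-way partition by index overlap ($r=2,1,0$), the same multiplicities $2K(K-1)$, $4K(K-1)(K-2)$, $K(K-1)(K-2)(K-3)$, and the same key observations that the single-overlap case yields $P_3$ and that disjoint pairs remain correlated through the shared unitary, yielding $P_{2,2}$. The only cosmetic difference is that you compute the second moment and subtract $P_2^2$ at the end, whereas the paper organizes the identical case analysis as a sum of covariances.
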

\begin{proof}
    For any $l \in [L]$, we define $X_{k,k'}=\ind{Z_{l,k}=Z_{l,k'}}$ to be the Bernoulli random variable with mean $P_2=\E_U[\sum_{\bm{z}}P(\bm{z})^2]$.
    Then, we can write
    \begin{equation}
        \var[\hat{S}_l^{(K)}]=\frac{1}{K^2(K-1)^2}\sum_{\substack{k,k'=1\\k\neq k'}}^K\sum_{\substack{j,j'=1\\j\neq j'}}^K \cov[X_{k,k'},X_{j,j'}], \label{eqn:sumsqprobest_var_intermediate}
    \end{equation}
    which we evaluate using combinatorial arguments.
    To facilitate counting, we compare the indices $(k, k')$ and $(j, j')$ under the requirement that $k \neq k'$ and $j \neq j'$ based on this contraint in the sum above.
    Observe that $X_{k, k'} = X_{k', k}$, and thus, we need to account for indices $(k, k')$ and $(k', k)$ denoting the same random variable.
    To proceed, we break the indices appearing in the sum into three cases and evaluate each of them separately.
    \begin{enumerate}
        \item[Case 1:] $(k, k') = (j, j')$ and its permutation $(k, k') = (j', j)$. There are $2 K(K - 1)$ such terms including the permutation. This results in $X_{k, k'} = X_{j, j'}$, and thus, $\cov(X_{k, k'} X_{j, j'}) = \var(X_{k, k'}) = P_2 (1 - P_2)$.
        \item[Case 2:] $k = j$ \& $k' \neq j'$ and its $3$ permutations ($k = j'$ \& $k' \neq j$, $k' = j$ \& $k \neq j'$, $k' = j'$ \& $k \neq j$). There are a total of $4 K (K - 1) (K - 2)$ such terms including permutations. For $k = j$ \& $k' \neq j'$, we have $X_{k, k'} X_{j, j'} = \ind{\bm{Z}_k = \bm{Z}_{k'} = \bm{Z}_{j'}}$, which takes the value $1$ when exactly $3$ independent outcome strings are equal and $0$ otherwise. Subsequently, $\cov(X_{k, k'}, X_{j, j'}) = \mathbb{E}[X_{k, k'} X_{j, j'}] - \mathbb{E}[X_{k, k'}] \mathbb{E}[X_{j, j'}] = P_3 - P_2^2$, where $P_3 = \mathbb{E}_U[\sum_{\bm{z}} P_U(\bm{z})^3]$. The permutations of $(k, k')$, $(j, j')$ noted above give the same value for the covariance.
        \item[Case 3:] The last case corresponds to the situation where none of the indices are equal. There are $K (K - 1) (K - 2) (K - 3)$ such terms. In this case, the random variables $X_{k, k'}$ and $X_{j, j'}$ are independent with respect to the outcome probabilities, and subsequently, $\cov(X_{k, k'}, X_{j, j'}) = \mathbb{E}[X_{k, k'} X_{j, j'}] - \mathbb{E}[X_{k, k'}] \mathbb{E}[X_{j, j'}] = \mathbb{E}_U[(\sum_{\bm{z}} P_U(\bm{z})^2)^2] - P_2^2$. For convenience, we denote $P_{2, 2} = \mathbb{E}_U[(\sum_{\bm{z}} P_U(\bm{z})^2)^2]$.
    \end{enumerate}
    Using the values calculated in the above cases in Eq.~\eqref{eqn:sumsqprobest_var_intermediate}, one can infer that the variance of $\hat{S}_l^{(K)}$ is given by Eq.~\eqref{eqn:sumsqprobest_var}. Importantly, $\var[\hat{S}_l^{(K)}]$ is independent of $l$.
\end{proof}

In Theorem~\ref{thm:CE-via-LRMs-MoM}, we restrict our attention to $K = 2$.
Below, we briefly mention how one can obtain bounds on the variance for any $K \geq 2$.
First, we bound the variance from above by dropping terms corresponding to $-P_2^2$ to obtain
\begin{align*}
    \var[\hat{S}_l^{(K)}] &\leq \frac{2 P_2 + 4 (K - 2) P_3}{K (K - 1)}
                                    + \frac{(K - 2) (K - 3) P_{2, 2}}{K(K-1)}.
\end{align*}
As noted in Section~\ref{sec:LRM-Median-of-Means}, we have the bound $P_2 \leq (2/3)^s$ using the expression for CE given in Eq.~\eqref{eq:CE-LRM-exact-all-strings}.
To bound $P_3$, we integrate $P_U(\bm{z})^3$ over the Haar measure, which amounts to computing the third moment.
\begin{align}
\begin{split}
    \mathbb{E}_U\left[P_U(\bm{z})^3\right]&=\tr{\rho^{\otimes 3}\prod_{i \in S}\mathbb{E}_{U_i}\left[U_i^{\dagger^{\otimes 3}}\ketbra{\bm{z}}{\bm{z}}^{\otimes 3}U_i^{\otimes 3}\right]},\\
    &=\frac{1}{4^{s}}\tr{\rho^{\otimes 3}\prod_{i \in S}\Pi^i_{\mathrm{sym},3}},\\
    &\leq \frac{1}{4^{s}},
\end{split}
\end{align}
where $\Pi^i_{\mathrm{sym},3}$ denotes the projector onto the symmetric subspace defined by the symmetric group $S_3$ for the $i$-th index~\cite{mele2023Introduction}.
Thus, we obtain $P_3 = \sum_{\bm{z}} \mathbb{E}_U[P_U(\bm{z})^3] \leq 1/2^s$.
Finally, for bounding $P_{2, 2}$, we need to compute the fourth moment.
As mentioned previously, since Cliffords only form a $3$-design, the value of $P_{2, 2}$ will differ for local Cliffords and local Haar random unitaries.
We leave this computation for future work, and instead give a simple bound on $P_{2, 2}$ here.
Since $\sum_{\bm{z}} P_U(\bm{z})^2 \leq 1$ for any unitary $U$, we obtain $P_{2, 2} \leq P_2$.
This bound can likely be tightened by directly computing $P_{2, 2}$ as noted above.
In any case, since the variance does not go to zero and instead approaches $P_{2, 2} - P_2^2$ for large $K$,
sampling and measuring many unitaries is unavoidable for estimating CE to a small enough precision.
Moreover, a larger value of $K$ leads to a larger computational cost.
For this reason, we focus on $K = 2$ in this study.
\subsection{SICs}\label{app:main-results-SICs}

\begin{proof}[Proof of Theorem.~\ref{thm:CE-via-designs-exact}]\label{app:proof-CE-via-designs-exact}
    Let $S \subseteq [n]$ denote the system of interest. Let $\{p_i,\ket{\phi_i}\}_{i=1}^N$ be a projective 2-design with associated probability distribution $\phi$ such that $\mathbb{E}_{\phi}[(\ketbra{\phi}{\phi})^{\otimes 2}]=\frac{1}{3}\Pi_+$. Finally denote the joint probability distribution of the local projective 2-designs as $\Phi$ and have $\ketbra{\Phi}{\Phi}=\prod_S \ketbra{\phi}{\phi}$ denote the tensor product of local random variable states. Then the expectation of the squared projection of $\rho$ onto $\ketbra{\Phi}{\Phi}$ w.r.t. it's probability distribution is
    \begin{align}
    \begin{split}
        \mathbb{E}_{\Phi}[\tr{\rho \ketbra{\Phi}{\Phi}}^2] &= \mathbb{E}_{\Phi}[\tr{\rho \ketbra{\Phi}{\Phi}}\tr{\rho \ketbra{\Phi}{\Phi}}],\\
        &=\mathbb{E}_{\Phi}[\tr{(\rho \ketbra{\Phi}{\Phi}) \otimes (\rho \ketbra{\Phi}{\Phi})}],\\
        &=\mathbb{E}_{\Phi}[\tr{\rho^{\otimes 2}(\ketbra{\Phi}{\Phi})^{\otimes 2}}],\\
        &=\tr{\rho^{\otimes 2}\mathbb{E}_{\Phi}[(\ketbra{\Phi}{\Phi})^{\otimes 2}]},\\
        &=\tr{\rho^{\otimes 2}\mathbb{E}_{\Phi}\left[\prod_S (\ketbra{\phi}{\phi})^{\otimes 2}\right]},\\
        &=\tr{\rho^{\otimes 2}\prod_S \mathbb{E}_{\phi}[(\ketbra{\phi}{\phi})^{\otimes 2}]},\\
        &=\left(\frac{1}{3}\right)^{s}\tr{\rho^{\otimes 2} \prod_{i\in S} \Pi_+^i},\\
        \mathbb{E}_{\Phi}[\tr{\rho \ketbra{\Phi}{\Phi}}^2]&=\left(\frac{1}{6}\right)^{\lvert s \rvert}\sum_{\alpha \in \mathcal{P}(s)}\tr{\rho_{\alpha}^2}.
    \end{split}
    \end{align}
    Here we used the independence of the marginal probability distributions to take the expectation of each random state variable in the tensor product separately, then applied Lemma \ref{lemma:swap-trick}, i.e. the SWAP trick. After shuffling around factors we arrive at,
    \begin{align}
        \CC_{\ket{\psi}}(S)=1-3^{s} \mathbb{E}_{\Phi}[\tr{\rho\ketbra{\Phi}{\Phi}}^2].
    \end{align}
    
\end{proof}

\begin{proof}[Proof of Theorem~\ref{thm:CEs-via-SICs-MoM-K=2}]\label{app:proof-CE-via-SICs-MoM-K=2}
    Given two iid samples $\bm{Q}_{2i - 1}, \bm{Q}_{2i}$, we define $\hat{S}^{(K)} = \ind{\bm{Q}_{2i - 1}, \bm{Q}_{2i}}$, which is an unbiased estimator of $P_2 := \sum_{\bm{q}} P(\bm{q})^2$.
    Thus, $\hat{\mathcal{C}}_i^{(2)} = 1 - 3^s \hat{S}^{(2)}$ is an unbiased estimator for $\mathcal{C}_{\ket{\psi}}(S)$ using SIC measurements.
    From Lemma~\ref{lem:SIC_sumsqprobest_K_var}, we know that the variance of $\hat{S}^{(2)}$ is bounded above by $P_2$ for $K = 2$, and therefore, $\var[\hat{\mathcal{C}}_i^{(2)}] \leq 9^s P_2$.
    From the expression of CE in Eq.~\eqref{eq:CE-via-SICs} in terms of SIC measurements, we know that $P_2 \leq 1/3^s$, giving $\var[\hat{\mathcal{C}}_i^{(2)}] \leq 3^s$.
    Then, using Proposition~\ref{prop:med-of-means}, we obtain the desired result.
\end{proof}

\begin{theorem}[CE Estimation via SIC Data and MoM]\label{thm:CEs-via-SICs-MoM-K_Opt}
    Given a precision $\epsilon > 0$ and confidence level $1 - \delta \in (0, 1)$, perform a total of $M = N_B K_{\mathrm{opt}}$ SIC measurements,
    where $N_B=\lceil 8\log(1/\delta)\rceil$ and
    \begin{align}
    \begin{split}
        K_{\mathrm{opt}} &= \Biggl \lceil \frac{1}{2} \left(\frac{16}{\epsilon^2} \left(\frac{3}{2}\right)^{s} + 1\right) \\
                         &\qquad + \frac{1}{2} \sqrt{\left(\frac{16}{\epsilon^2} \left(\frac{3}{2}\right)^{s} - 1\right)^2 + \frac{32}{\epsilon^2} 3^{s}} \Biggr \rceil. \label{eqn:SIC_K_opt}
    \end{split}
    \end{align}
    Denote $\mathbf{Q}_1, \ldots, \mathbf{Q}_M$ to be the outcomes obtained from these measurements.
    Break these $M$ outcomes into $N_B$ batches of size $K_{\mathrm{opt}}$, and for each $1 \leq b \leq N_B$, compute estimate
    \begin{align}
        \overline{\mathcal{C}}_{\ket{\psi}}^{(b)}(S) = 1 - 3^{s} \hat{S}_b^{(K_{\mathrm{opt}})}, \label{eqn:SIC_K_opt_CE_estimator}
    \end{align}
    where
    \begin{align}
        \hat{S}_b^{(K_{\mathrm{opt}})} &= \frac{1}{K_{\mathrm{opt}} (K_{\mathrm{opt}} - 1)} \sum_{\substack{k, k' = (b - 1) K_{\mathrm{opt}} + 1\\ k \neq k'}}^{b K_{\mathrm{opt}}} \ind{\bm{Q}_{k} = \bm{Q}_{k'}}.
    \end{align}
    Then, we have the guarantee that
    \begin{align}
        \Pr[|\med[\overline{\CC}^{(1)}_{\ket{\psi}}(S),\ldots,\overline{\CC}^{(N_B)}_{\ket{\psi}}(S)]- \mathcal{C}_{\ket{\psi}}(S)| \geq \epsilon] \leq \delta,
    \end{align}
    giving an upper bound of
    \begin{equation}
        O\left(\left[\left(\frac{3}{2}\right)^{s} \epsilon^{-2} + \sqrt{3^{s}} \epsilon^{-1}\right] \log\left(\frac{1}{\delta}\right)\right)
    \end{equation}
    on the sample complexity of estimating $\mathcal{C}_{\ket{\psi}}(S)$ using SIC measurements.
\end{theorem}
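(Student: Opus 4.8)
The plan is to follow the same template as the $K=2$ SIC result (Theorem~\ref{thm:CEs-via-SICs-MoM-K=2}), but to treat $K$ as a free parameter and choose it to minimize the total sample count $M = N_B K$. The three ingredients are: unbiasedness of the batch estimator, a usable upper bound on its variance as a function of $K$, and an application of median-of-means (Proposition~\ref{prop:med-of-means}) with the confidence-controlling batch number fixed at $N_B = \lceil 8\log(1/\delta)\rceil$. First I would note that, exactly as in the proof of Theorem~\ref{thm:CEs-via-SICs-MoM-K=2}, the $U$-statistic $\hat{S}_b^{(K_{\mathrm{opt}})}$ is an unbiased estimator of $P_2 = \sum_{\mathbf{q}} P(\mathbf{q})^2$, so by Eq.~\eqref{eq:CE-via-SICs} the batch estimate $\overline{\CC}^{(b)}_{\ket{\psi}}(S) = 1 - 3^s \hat{S}_b^{(K_{\mathrm{opt}})}$ is unbiased for $\CC_{\ket{\psi}}(S)$.

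The key step is the variance of $\hat{S}_b^{(K)}$. I would obtain it from the identical three-case combinatorial argument used for Lemma~\ref{lem:sumsqprobest_var}, with one crucial simplification: because SIC outcomes are i.i.d.\ (there is no averaging over random unitaries), the fourth-moment quantity satisfies $P_{2,2} = P_2^2$, so the Case-3 covariance vanishes identically. This leaves
\begin{align}
    \var[\hat{S}_b^{(K)}] = \frac{2 P_2(1 - P_2) + 4(K-2)(P_3 - P_2^2)}{K(K-1)},
\end{align}
with $P_3 = \sum_{\mathbf{q}} P(\mathbf{q})^3$. Unlike the LRM case, this vanishes as $K \to \infty$, which is precisely why optimizing $K$ pays off here. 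Dropping the negative terms and using $\tfrac{K-2}{K-1}\leq 1$ gives $\var[\hat{S}_b^{(K)}] \leq \tfrac{2P_2}{K(K-1)} + \tfrac{4 P_3}{K}$. I then need two state-independent bounds: $P_2 \leq 1/3^s$, which follows from $\CC_{\ket{\psi}}(S)\geq 0$ and Eq.~\eqref{eq:CE-via-SICs}; and $P_3 \leq 1/6^s$, which follows by writing $P_3 \leq (\max_{\mathbf{q}} P(\mathbf{q}))\,P_2$ and noting that each local POVM element obeys $\ketbra{\tilde\phi_i}{\tilde\phi_i}\preceq \tfrac{1}{2}\mathbb{I}$, so the product projector obeys $\ketbra{\tilde\Phi_{\mathbf{q}}}{\tilde\Phi_{\mathbf{q}}}\preceq 2^{-s}\mathbb{I}$ and hence $P(\mathbf{q})\leq 2^{-s}$. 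Combining, $\var[\overline{\CC}^{(b)}] = 9^s\var[\hat{S}_b^{(K)}] \leq \tfrac{2\cdot 3^s}{K(K-1)} + \tfrac{4(3/2)^s}{K}$.

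With the variance bound in hand, the optimization is routine algebra: I would impose $\var[\overline{\CC}^{(b)}]\leq \epsilon^2/4$, the condition under which each batch estimate lies within $\epsilon$ of the target with probability at least $3/4$ (Chebyshev). Clearing denominators turns this into the requirement $K^2 - (A+1)K + (A - C/4)\geq 0$, with $A = \tfrac{16}{\epsilon^2}(3/2)^s$ and $C = \tfrac{32}{\epsilon^2}3^s$; the smallest admissible integer $K$ is the larger root of the associated quadratic rounded up, which is exactly $K_{\mathrm{opt}}$ of Eq.~\eqref{eqn:SIC_K_opt}. Feeding batch size $K_{\mathrm{opt}}$ and batch number $N_B = \lceil 8\log(1/\delta)\rceil$ into median-of-means (Proposition~\ref{prop:med-of-means}) then yields the stated confidence guarantee, and the total sample complexity is $M = N_B K_{\mathrm{opt}}$.

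The last step is the asymptotic reading of $K_{\mathrm{opt}}$. Since $\sqrt{(A-1)^2 + C}\leq (A-1) + \sqrt{C}$, one gets $K_{\mathrm{opt}} = O(A + \sqrt{C}) = O\!\left((3/2)^s\epsilon^{-2} + \sqrt{3^s}\,\epsilon^{-1}\right)$, and multiplying by $N_B = O(\log(1/\delta))$ gives the advertised bound. I expect the main obstacle to be the variance step rather than the optimization: one must carefully justify that the SIC $U$-statistic reuses the Lemma~\ref{lem:sumsqprobest_var} combinatorics with $P_{2,2}\to P_2^2$, and the operator bound $P(\mathbf{q})\leq 2^{-s}$ needed for $P_3\leq 1/6^s$ must be established cleanly, since this is the bound that pins down the $(3/2)^s$ coefficient inside $A$ and hence the leading $(3/2)^s\epsilon^{-2}$ scaling.
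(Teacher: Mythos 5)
Your proposal is correct and takes essentially the same route as the paper's proof: the same unbiasedness argument for the batch $U$-statistic, the same three-case variance computation in which the Case-3 covariance vanishes because SIC outcomes are i.i.d., the same bounds $P_2 \leq 3^{-s}$ and $P_3 \leq 6^{-s}$ via $P(\mathbf{q}) \leq 2^{-s}$, the same requirement $\var[\overline{\CC}^{(b)}_{\ket{\psi}}(S)] \leq \epsilon^2/4$ leading to a quadratic inequality in $K$, and the same application of median-of-means with $N_B = \lceil 8 \log(1/\delta) \rceil$ batches. As a minor remark, your extra loosening $(K-2)/(K-1) \leq 1$ produces a quadratic whose larger root reproduces the stated $K_{\mathrm{opt}}$ of Eq.~\eqref{eqn:SIC_K_opt} exactly, so nothing is lost relative to the paper's derivation.
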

\begin{proof}\label{app:proof-CE-via-SICs-K_opt}
    From Lemma~\ref{lem:SIC_sumsqprobest_K_var}, we know that for any $K \geq 2$,
    \begin{equation*}
        \hat{S}^{(K)} = \frac{1}{K (K - 1)} \sum_{\substack{k, k' = 1\\ k \neq k'}}^K \ind{\bm{Q}_{k} = \bm{Q}_{k'}}
    \end{equation*}
    is an unbiased estimator of sum of squared probabilities for SIC measurents, with
    \begin{equation*}
        \var[\hat{S}^{(K)}] = \frac{2 P_2 (1 - P_2) + 4 (K - 2) (P_3 - P_2^2)}{K (K - 1)},
    \end{equation*}
    where $P_2 = \sum_{\bm{q}} P(\bm{q})^2$ and $P_3 = \sum_{\bm{q}} P(\bm{q})^3$.
    Dropping $-P_2^2$ terms form the above equation for variance, we obtain the bound
    \begin{equation*}
        \var[\hat{S}^{(K)}] \leq \frac{2 P_2 + 4 (K - 2) P_3}{K (K - 1)}.
    \end{equation*}
    To proceed, we need to bound $P_2$ and $P_3$. As mentioned in the proof of Theorem~\ref{thm:CEs-via-SICs-MoM-K=2}, we have $P_2 \leq 1/3^s$.
    To bound $P_3$, we note that
    \begin{align*}
    \begin{split}
        P_3 &= \sum_{\bm{q}} P(\bm{q}) \cdot P(\bm{q})^2 \\
            &\leq \left(\max_{\bm{q}} P(\bm{q})\right) P_2 \\
            &\leq \left(\max_{\bm{q}}\tr{\rho \left(\frac{1}{2^{s}}\prod_{i \in \bm{q}}\ketbra{\psi_i}{\psi_i}\right)}\right) \left(\frac{1}{3^{s}}\right) \\
            &\leq \frac{1}{6^{s}}.
    \end{split}
    \end{align*}
    Plugging this back into the bound on $\var[\hat{S}^{(K)}]$, we obtain
    \begin{align}
        \var[\hat{S}^{(K)}]\leq \frac{2}{K(K-1)}\left(\frac{1}{3^{s}}+\frac{2(K-2)}{6^{s}}\right).
    \end{align}
    Since $\hat{\mathcal{C}}^{(K)} = 1 - 3^{s} \hat{S}^{(K)}$ gives an unbiased estimate of $\mathcal{C}$, we obtain
    \begin{align}
        \var[\hat{\mathcal{C}}^{(K)}] &= 9^{s} \var[\hat{S}^{(K)}] \nonumber \\
                                      &\leq \frac{2}{K(K-1)}\left(3^{s}+2(K-2)\left(\frac{3}{2}\right)^{s}\right). \label{eqn:CE_SIC_K_var_bound}
    \end{align}
    With the future use of median-of-means estimation in mind, we impose the requirement that $\var[\hat{\mathcal{C}}^{(K)}] \leq \epsilon^2 / 4$, given the precision $\epsilon > 0$ for estimating CE.
    Using this requirement with the bound on the variance of $\hat{\mathcal{C}}^{(K)}$ determined in Eq.~\eqref{eqn:CE_SIC_K_var_bound}, we obtain the inequality
    \begin{equation*}
        K^2 - \left(\frac{16}{\epsilon^2} \left(\frac{3}{2}\right)^s + 1\right) K - \frac{8 \cdot 3^s}{\epsilon^2} \left(1 - \frac{1}{2^{|s| - 1}}\right) \geq 0.
    \end{equation*}
    Noting that $1 - \frac{1}{2^{s - 1}} \geq 0$ for $s \geq 1$, we obtain the solution $K = K_{\mathrm{opt}}$ given in Eq.~\eqref{eqn:SIC_K_opt}, which is the smallest integer value of $K$ satisfying the above inequality.
    Since for any integer $n$, we have $n \leq \lceil n \rceil \leq n + 1$, one can verify using Eq.~\eqref{eqn:CE_SIC_K_var_bound} that the above choice of $K$ gives $\var[\hat{\mathcal{C}}^{(K)}] \leq \epsilon^2/4$.

    Thus, $\hat{\mathcal{C}}_{\ket{\psi}}^{(b)}(S)$ defined in Eq.~\eqref{eqn:SIC_K_opt_CE_estimator} is an unbiased estimator of $\mathcal{C}_{\ket{\psi}}(S)$,
    with $\var[\hat{\mathcal{C}}_{\ket{\psi}}^{(b)}] \leq \epsilon^2/4$ for each $b$.
    Consequently, using Proposition~\ref{prop:med-of-means}, we obtain the desired result.
\end{proof}

The only remaining step is to compute the variance of the estimator for the sum of squared probabilities for SIC measurements.
We summarize this calculation in the lemma below.
\begin{lemma}
    \label{lem:SIC_sumsqprobest_K_var}
    Given iid outcomes $\bm{Q}_1, \dotsc, \bm{Q}_K$ from SIC measurements with $K \geq 2$, let
    \begin{equation}
        \hat{S}^{(K)} = \frac{1}{K (K - 1)} \sum_{\substack{k, k' = 1\\ k \neq k'}}^K \ind{\bm{Q}_{k} = \bm{Q}_{k'}}
    \end{equation}
    be an unbiased estimator of $P_2 = \sum_{\bm{q}} P(\bm{q})^2$, where $P(\bm{q})$ denotes the probability of observing the string $\bm{q}$ after a local SIC measurement.
    Then, the variance of this estimation is given by
    \begin{equation}
        \var[\hat{S}^{(K)}] = \frac{2 P_2 (1 - P_2) + 4 (K - 2) (P_3 - P_2^2)}{K (K - 1)}, \label{eqn:SIC_sumsqprobest_K_var}
    \end{equation}
    where $P_3 = \sum_{\bm{q}} P(\bm{q})^3$.
\end{lemma}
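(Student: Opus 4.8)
The plan is to mirror the combinatorial second-moment computation carried out for the LRM estimator in Lemma~\ref{lem:sumsqprobest_var}, while exploiting the crucial simplification that the SIC outcomes $\bm{Q}_1, \dotsc, \bm{Q}_K$ are genuinely i.i.d.\ draws from the fixed distribution $P(\bm{q})$, since there is no unitary to average over. First I would set $X_{k,k'} = \ind{\bm{Q}_k = \bm{Q}_{k'}}$, a Bernoulli variable with $\E[X_{k,k'}] = \sum_{\bm{q}} P(\bm{q})^2 = P_2$, so that $\hat{S}^{(K)} = \frac{1}{K(K-1)} \sum_{k \neq k'} X_{k,k'}$ is manifestly unbiased for $P_2$. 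Expanding the variance gives
\begin{equation*}
\var[\hat{S}^{(K)}] = \frac{1}{K^2(K-1)^2} \sum_{\substack{k,k'=1\\k\neq k'}}^K \sum_{\substack{j,j'=1\\j\neq j'}}^K \cov[X_{k,k'}, X_{j,j'}],
\end{equation*}
which I would evaluate by partitioning the index quadruples into the same three cases used previously.

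Next I would count and evaluate each case. In Case~1, where $\{k,k'\} = \{j,j'\}$ as sets (i.e.\ $(j,j')$ equals $(k,k')$ or its swap), there are $2K(K-1)$ terms, each contributing $\var[X_{k,k'}] = P_2(1-P_2)$. In Case~2, where the two index pairs share exactly one label, there are $4K(K-1)(K-2)$ terms (the four permutations $k=j$, $k=j'$, $k'=j$, $k'=j'$); here the product collapses to the indicator that three distinct outcomes coincide, so $\E[X_{k,k'} X_{j,j'}] = \sum_{\bm{q}} P(\bm{q})^3 = P_3$ and the covariance is $P_3 - P_2^2$.

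The decisive step is Case~3, where all four indices are distinct: there are $K(K-1)(K-2)(K-3)$ such terms, and because the SIC samples are unconditionally independent, $X_{k,k'}$ and $X_{j,j'}$ depend on disjoint outcomes and are therefore independent, giving $\cov[X_{k,k'},X_{j,j'}] = 0$. This is exactly where the SIC computation departs from the LRM one: in Lemma~\ref{lem:sumsqprobest_var} the shared Haar-random unitary correlates otherwise-disjoint outcomes and leaves the residual term $(K-2)(K-3)(P_{2,2}-P_2^2)/[K(K-1)]$, whereas here that term vanishes identically. Assembling the three contributions,
\begin{equation*}
\var[\hat{S}^{(K)}] = \frac{2K(K-1)P_2(1-P_2) + 4K(K-1)(K-2)(P_3-P_2^2)}{K^2(K-1)^2},
\end{equation*}
and cancelling the common factor $K(K-1)$ yields Eq.~\eqref{eqn:SIC_sumsqprobest_K_var}. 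I do not anticipate a genuine obstacle: the whole argument is routine bookkeeping of covariances, and the only point requiring care is the justification that Case~3 contributes nothing, which follows immediately from the independence of distinct SIC shots.
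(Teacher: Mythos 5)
Your proposal is correct and follows essentially the same route as the paper: both mirror the three-case covariance decomposition of Lemma~\ref{lem:sumsqprobest_var}, with identical counting ($2K(K-1)$, $4K(K-1)(K-2)$, and $K(K-1)(K-2)(K-3)$ terms) and the same key observation that the all-distinct case has vanishing covariance because SIC outcomes are genuinely i.i.d.\ with no shared random unitary. Your write-up simply makes explicit the bookkeeping that the paper compresses by reference to the LRM proof.
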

\begin{proof}
    That $\hat{S}^{(K)}$ is an unbiased estimator of $P_2$ can be verified using similar arguments given for LRMs.
    To compute the variance, we again break the sum defining $\hat{S}^{(K)}$ into three cases, following the proof of Lemma~\ref{lem:sumsqprobest_var} for LRMs.
    The first two cases are identical to the variance calculation for LRMs, except that $P_2 = \sum_{\bm{q}} P(\bm{q})^2$ and $P_3 = \sum_{\bm{q}} P(\bm{q})^3$.
    The third case is also the same, except that $\cov(X_{k, k'}, X_{j, j'}) = \mathbb{E}[X_{k, k'} X_{j, j'}] - \mathbb{E}[X_{k, k'}] \mathbb{E}[X_{j, j'}] = 0$,
    since $X_{k, k'}$ and $X_{j, j'}$ are independent with respect to the outcome probabilities and we do not need to take an expectation over unitaries for SIC measurements.
    Putting these observations together, we find that the variance of $\hat{S}^{(K)}$ is given by Eq.~\eqref{eqn:SIC_sumsqprobest_K_var}.
\end{proof}

\section{Simulation Details}
\subsection{LRMs Simulations}\label{app:LRM-Simulations}
In the main text, our simulations evaluate the CEs of two canonical examples of entangled states. First, the $n$-qubit Greenberger-Horne-Zeilinger (GHZ) state, which is defined as 
\begin{align}
    \ket{\mathrm{GHZ}}_n = \frac{1}{\sqrt{2}}(\ket{0}^{\otimes n} + \ket{1}^{\otimes n}  ).
\end{align}
Because the purity of any subsystem of this state is $1/2$, the CE of the full state can easily be computed as \cite{beckey2023Multipartite}
\begin{align}
    \CC_{\ket{\mathrm{GHZ}}}([n]) = \frac{1}{2} - \frac{1}{2^n}.
\end{align}
The W state is defined as a uniform superposition of all states with Hamming weight one
\begin{align}
    \ket{W}_n = \frac{1}{\sqrt{n}}(\ket{10\dotsm 0} + \ket{010\dotsm 0} + \dotsm + \ket{0\dotsm 01}).
\end{align}
The general formula is somewhat harder to prove, but the result is similarly simple \cite{beckey2023Multipartite}
\begin{align}
    \CC_{\ket{W}}([n]) = \frac{1}{2} - \frac{1}{2n}.
\end{align}
These are the exact formulas used to benchmark the performance of our estimators as shown in Fig.~\ref{fig:CE-LRM-GHZ-W}.

Recall from Theorem~\ref{thm:Ohnemus-LRM-Hoeffding} the estimator for the CEs based on $L$ rounds of LRMs, $\bm{Z}_{1,1},\ldots,\bm{Z}_{1,K},\ldots,\bm{Z}_{L,1},\ldots,\bm{Z}_{L,K}$, is,
\begin{align}
    \hat{\CC}_{\ket{\psi}}(S)=1-\left(\frac{3}{2}\right)^{s}\frac{1}{L}\sum_{l=1}^L \hat{S}^{(K)}_l.
\end{align}
From this, we employ the following algorithm to simulate this estimator:
\begin{algorithm}[H]
    \caption{LRM-Mean Simulation}\label{algo:MeanLRM}
        \begin{algorithmic}[1]
        \State \textbf{Input:} $n$-qubit pure state $\rho = \ketbra{\psi}{\psi}$, $S \subseteq [n]$, number of unitaries $L$ of the form $\prod_{i \in S}U_i$, and $K\geq 2$ shots per unitary.
        \State Let $\DC_l$ denote a discrete probability distribution over bitstrings and $C(2)$ denote the set of single qubit Clifford gates.
        \Function{MeanCEviaLRMs}{$\rho,S,L,K$}
        \State $l=1$
        \State $\hat{S}^{(K)}=0$
        \While {$l \leq L$}
            \State $U_l = \prod_{i \in S} U_i ,\quad U_i \in C(2)$
            \State $\DC_l=\{\bra{\bm{z}}U_l^\dagger \rho U_l \ket{\bm{z}}\rightarrow \bm{z}\}_{\bm{z}=0}^{2^s-1}$
            \State $M_l=\mathrm{Sample}(\DC_l,K)$
            \For{$i \neq j \in M_l$}
                \If{$\bm{Z}_{l,i}=\bm{Z}_{l,j}$}
                    $\hat{S}^{(K)} = \hat{S}^{(K)} +1$
                \EndIf
            \EndFor
            \State $l=l+1$
        \EndWhile
        \State \textbf{return} $\hat{\CC}_{\ket{\psi}}(S)=1-\left(\frac{3}{2}\right)^{s} \frac{1}{L}\frac{1}{K(K-1)} \hat{S}^{(K)}$
        \EndFunction
        \end{algorithmic}
\end{algorithm}

It is worth noting that for Fig.~\ref{fig:CE-LRM-GHZ-W} unitaries were sampled from the Haar distribution rather than from the single qubit Cliffords. In practice, it is likely single qubit Cliffords would be preferable. 

For median-of-means estimation, we directly call this function with a specific value for $L$:

\begin{algorithm}[H]
    \caption{LRM-MoM Simulation}\label{algo:MoMLRM}
        \begin{algorithmic}[1]
        \State \textbf{Input:} $n$-qubit pure state $\rho =\ketbra{\psi}{\psi}$, $S\subseteq [n]$, number of unitaries $L$, shots per unitary $K\geq 2$, and confidence level $1-\delta\in(0,1)$.
        \Function{MoMCEviaLRMs}{$\rho,S,L,K,\delta$}
        \State $N_B = \lceil 8 \log(\frac{1}{\delta}) \rceil$
        \State $B = \lceil \frac{L}{N_B}\rceil$
        \State \textbf{return} $\mathrm{Median}[\{\mathrm{MeanCEviaLRMs}[\rho,S,B,K]\}_{i=1}^{N_B}]$
        \EndFunction
        \end{algorithmic}
\end{algorithm}

\subsection{SICs Simulations}
Following the same modality as in the LRM section, we have the following estimator from Theorem~\ref{thm:CEs-via-SICs-MoM-K=2},
\begin{align}
        \hat{\CC}_{\ket{\psi}}(S)=1-3^s\hat{S}^{(K)},
\end{align}
where $\bm{Q}_1,\ldots,\bm{Q}_M$ are the outcomes from $M$ rounds of SIC measurements. For ease of notation, let $\rho \otimes \ketbra{0}{0}^{\otimes n}$ denote the state with qubits ordered such that each system qubit is next to its ancilla qubit. Then the following algorithm simulates this estimator:\\

\begin{algorithm}[H]
    \caption{SIC-MoM Simulation}\label{algo:MoMSIC}
        \begin{algorithmic}[1]
        \State \textbf{Input:} $n$-qubit pure state $\rho = \ketbra{\psi}{\psi}$, $S\subseteq [n]$, $K$, and $\delta>0$ confidence level.
        \State Let $\DC$ denote a discrete probability distribution.
        \Function{MoMCEviaSICs}{$\rho,S,K,\delta$}
        \State $N_B$=$\lceil 8 \log (\frac{1}{\delta})\rceil$
        \State $U = \prod_{i \in S}U_\mathrm{SIC}$
        \State $\DC=\{\bra{\bm{q}}U^\dagger(\rho \otimes \ketbra{0}{0}^{\otimes n}) U\ket{\bm{q}}\rightarrow \bm{q}\}_{\bm{q}=0}^{4^{s}-1}$
        \State $b=1$
        \While{$b\leq N_B$}
        \State $\hat{S}^{(K)}_b=0$
        \State $M_b=\mathrm{Sample}(\DC,K)$
            \For{$i\neq j \in M_b$}
                \If{$\bm{Q}_i=\bm{Q}_j$}
                $\hat{S}^{(K)}_b=\hat{S}^{(K)}_b+1$
                \EndIf
            \EndFor
        \State $\overline{\CC}^{(b)}_{\ket{\psi}}(S)=1-3^s \frac{1}{K(K-1)}\hat{S}^{(K)}_b$
        \EndWhile
        \State \textbf{return }$\hat{\CC}_{\ket{\psi}}(S)=\mathrm{median}[\{\overline{\CC}_{\ket{\psi}}^{(b)}(S)\}_{b=1}^{N_B}]$
        \EndFunction
        \end{algorithmic}
\end{algorithm}

To reproduce Fig.~\ref{fig:4histComp-GHZ5}, to each of the simulations outlined in Table~\ref{table:complexities} we input a $5$-qubit GHZ state that is $\rho=\ketbra{\mathrm{GHZ}_5}{\mathrm{GHZ}_5}$ and set $\delta=0.05$ for methods based on median-of-means. The total measurement budget is fixed based on the measurements required for SIC-MoM, $K=K_{\mathrm{opt}}$ to have $\epsilon=0.05$ precision at $1-\delta=0.95$ confidence. That is a total measurement budget of $M=K_{\mathrm{opt}}\lceil 8 \log(\frac{1}{\delta}) \rceil$ where,
\begin{align}
\begin{split}
    K_{\mathrm{opt}} &= \Biggl \lceil \frac{1}{2} \left(\frac{16}{\epsilon^2} \left(\frac{3}{2}\right)^{s} + 1\right) \\
    &+\frac{1}{2} \sqrt{\left(\frac{16}{\epsilon^2} \left(\frac{3}{2}\right)^{s} - 1\right)^2 + \frac{32}{\epsilon^2} 3^{s}} \Biggr \rceil.
\end{split}
\end{align}
The corresponding measurement splits were then computed from this total number. Then $1,000$ trials of $M$ total measurements of LRM data and SIC data were generated. Since both LRM-Mean and LRM-MoM have $K=2$, i.e. two shots per unitary, the same data was post-processed according to each estimator. This is possible as median-of-means only changes how things are batched since the total measurement budget was fixed. The same was done for the SIC estimators. As a final comment, to begin to see the asymptotic behavior within this plot, one would have to go up to roughly 12-13 qubits. This is because of the various pre-factors hidden by big-$O$ notation.

\end{document}